\documentclass[%
 aip,
 jmp,
 amsmath,amssymb,
 preprint,%
]{revtex4-1}

\usepackage{graphicx}
\usepackage{dcolumn}
\usepackage{bm}

\usepackage[utf8]{inputenc}
\usepackage[T1]{fontenc}
\usepackage{etoolbox}

\usepackage{amsthm,mathtools}
\usepackage{enumitem}
\usepackage{mathrsfs}
\usepackage{microtype}
\usepackage[colorlinks=true,linkcolor=blue,citecolor=blue,urlcolor=blue]{hyperref}

\newtheorem{theorem}{Theorem}[section]
\newtheorem{proposition}[theorem]{Proposition}
\newtheorem{lemma}[theorem]{Lemma}
\newtheorem{corollary}[theorem]{Corollary}

\theoremstyle{definition}
\newtheorem{definition}[theorem]{Definition}
\newtheorem{example}[theorem]{Example}

\newcommand{\R}{\mathbb{R}}
\newcommand{\N}{\mathbb{N}}

\newcommand{\id}{\mathrm{id}}
\newcommand{\Reg}{\mathrm{Reg}}
\newcommand{\Vcal}{\mathcal{V}}
\newcommand{\Ecal}{\mathcal{E}}
\newcommand{\Pcal}{\mathcal{P}}
\newcommand{\Scal}{\mathcal{S}}

\newcommand{\din}{D^{\mathrm{in}}}
\newcommand{\dout}{D^{\mathrm{out}}}

\makeatletter
\def\@email#1#2{%
 \endgroup
 \patchcmd{\titleblock@produce}
  {\frontmatter@RRAPformat}
  {\frontmatter@RRAPformat{\produce@RRAP{*#1\href{mailto:#2}{#2}}}\frontmatter@RRAPformat}
  {}{}
}%
\makeatother

\begin{document}


\title[Hamiltonian scattering on metric graphs]{Liouville-Preserving Hamiltonian Scattering on Finite Metric Graphs}

\author{Philip Hierhager}
\email{philip.hierhager@tum.de}
\affiliation{School of Computation, Information and Technology, Technical University of Munich, Munich, Germany}

\date{\today}

\begin{abstract}
A metric graph with a mechanical Hamiltonian on each edge does not, by itself,
define a deterministic classical motion through a branching vertex: conservation
of energy fixes only the outgoing speed, not the outgoing edge-end.  We study
the deterministic problem obtained after this missing vertex datum is supplied.
On each edge \(e\), with coordinate \(q\in[0,\ell_e]\), the Hamiltonian is
\(H_e(q,p)=p^2/2+V_e(q)\), where \(V\) is continuous on the graph and \(C^2\) on
every edge.  At each vertex we prescribe an energy-preserving Borel isomorphism
from incoming to outgoing nonzero boundary covectors.  The resulting phase space
is the measurable quotient that identifies each incoming boundary covector with
its prescribed outgoing one.  After excluding the finitely many energy levels
\(V(v)\), the edgewise Hamilton equations and the vertex laws concatenate to a
global one-parameter group of bimeasurable transformations.  The group preserves
energy and the quotient measure induced by the edgewise Liouville measures
\(dq\,dp\).  The proof uses no smooth symplectic structure on the quotient; the
invariance follows from ordinary edgewise Liouville invariance, a uniform
no-Zeno estimate on compact regular energy windows, and preservation of the
transverse Liouville flux \(r\,dr\) by the speedwise vertex permutations.  If the
vertex laws are compatible with momentum reversal, then the quotient dynamics is
reversible.  On regular energy surfaces satisfying the usual regular-value
condition, the induced time-parametrization measure is invariant as well.
\end{abstract}

\pacs{45.20.Jj, 02.40.Yy, 02.40.Sf, 02.30.Hq}

\noindent\textit{The following article has been submitted to Journal of Mathematical Physics.}

\maketitle

\section{Introduction}

Hamiltonian mechanics on a smooth configuration manifold is locally determined
by the Hamiltonian vector field.  On a finite metric graph this remains true on
open edges but not at vertices.  Along an edge \(e\), the Hamiltonian
\[
        H_e(q,p)=\frac12p^2+V_e(q)
\]
gives the one-dimensional equations
\[
        \dot q=p,\qquad \dot p=-V_e'(q).
\]
At a branching vertex, however, the graph supplies only the finite set of
incident edge-ends.  Conservation of energy fixes the outgoing speed, whenever
\(E>V(v)\), by
\[
        |p_{\mathrm{out}}|=\sqrt{2(E-V(v))},
\]
but it does not select the outgoing edge-end.  Thus the metric graph and the
potential determine the motion along edges, but they do not by themselves
determine a deterministic continuation through a branching vertex.

For example, at a three-valent star vertex an incoming particle with energy
\(E>V(v)\) may be reflected back into the same edge, transmitted into one of
the other two edges, or routed according to an energy-dependent rule.  All of
these choices are compatible with conservation of energy, since the outgoing
speed is fixed by \(E\), but they define different deterministic mechanical
systems.  The missing datum is therefore a vertex continuation law.

This paper studies the deterministic classical problem obtained after these
vertex data have been prescribed.  At each vertex \(v\) we choose a Borel
isomorphism
\[
        S_v:\din_v\longrightarrow \dout_v
\]
from incoming to outgoing nonzero boundary covectors, and require it to preserve
the boundary energy.  The associated scattering phase space is the measurable
quotient obtained by imposing the identifications
\[
        \alpha\sim S_v\alpha,
        \qquad \alpha\in \din_v.
\]
Given these data, we show that the concatenated edge-and-vertex dynamics is a
global bimeasurable flow on the regular quotient phase space, preserving both
energy and the quotient Liouville measure.

The contribution is to isolate the minimal measurable structure needed for this
deterministic classical Hamiltonian construction.  The vertex behavior is
encoded by energy-preserving Borel isomorphisms of boundary covectors, with no
differentiability, symplecticity, or canonical selection rule assumed at the
vertices.  Under these hypotheses one nevertheless obtains a global
deterministic flow on a regular quotient phase space, together with energy
conservation and Liouville-measure preservation.  Thus the invariant measure is
a consequence of edgewise Hamiltonian invariance and boundary flux balance,
rather than of a smooth symplectic structure on the quotient.

There is one special case that will be excluded.  At a vertex \(v\),
the boundary energy relation is
\[
        E=\frac12p^2+V(v).
\]
If \(E>V(v)\), then the boundary momentum is nonzero and the trajectory has a
well-defined incoming or outgoing direction.  If \(E=V(v)\), however, a
trajectory can reach the vertex with \(p=0\).  Such a state is neither an
incoming nor an outgoing nonzero boundary covector, and hence it is not in the
domain of the prescribed scattering map.  We therefore work on the regular
sector obtained by excluding the finite set of vertex-potential energy levels
\[
        \{V(v):v\in\Vcal\}.
\]
On this sector every vertex transition is transverse, and the scattering law
determines the continuation uniquely.

In informal terms, the result says the following.  Once deterministic
energy-preserving scattering maps are supplied at the vertices, the ordinary
Hamiltonian motion on the edges can be concatenated through all vertex impacts
for all positive and negative times.  The resulting motion is a genuine
measurable flow on the regular quotient phase space and preserves both energy
and the quotient Liouville measure, even when the vertex laws are only Borel.

Throughout, ``Hamiltonian'' refers to the edgewise Hamiltonian dynamics and its
energy-preserving measurable concatenation at vertices.  We do not claim that
the quotient phase space is a smooth symplectic manifold or that the scattered
evolution is generated by a global smooth Hamiltonian vector field.

\subsection{Relation to previous work}

The construction in this paper belongs to the broad family of graph models in
which propagation along one-dimensional edges is coupled to additional data at
vertices.  This principle is standard in the theory of quantum graphs.  A
quantum graph is not merely a metric graph: one specifies differential
operators on the edges together with self-adjoint vertex conditions; see, for
example, Kuchment's survey and the monograph of Berkolaiko--Kuchment
\cite{Kuchment2004QuantumGraphsI,BerkolaikoKuchment2013}.  Thus the general
lesson that vertex behavior is extra structure is well established.  The
present paper uses this lesson in a classical deterministic setting, replacing
self-adjoint boundary conditions by energy-preserving Borel maps between
incoming and outgoing boundary covectors.

Time-continuous classical dynamics on graphs was studied by Barra and Gaspard
using Frobenius--Perron methods
\cite{BarraGaspard2001ClassicalDynamics,BarraGaspard2002Transport}, and
quantum-graph scattering with edge potentials was treated by Rueckriemen and
Smilansky in their trace-formula work
\cite{RueckriemenSmilansky2012EdgePotentials}.  The present paper is
complementary to these works: it gives a deterministic classical Hamiltonian
construction with prescribed energy-preserving vertex laws, a measurable
quotient phase space, and Liouville-measure preservation.

The scattering formulation of quantum graphs is also closely related in spirit.
Kottos--Smilansky introduced a periodic-orbit theory for quantum graphs in which
propagation along directed bonds is coupled to scattering at vertices, and in
which an associated classical graph dynamics underlies the trace formula and
spectral statistics \cite{KottosSmilansky1999}.  Gnutzmann--Smilansky's review
places this bond-scattering picture in the broader context of quantum chaos and
universal spectral statistics on graphs \cite{GnutzmannSmilansky2006}.  The
present construction differs from these works in three respects.  First, the
motion here is continuous in time along compact metric edges.  Second, the edge
speed is generated by the mechanical Hamiltonian
\[
        H_e(q,p)=\frac12p^2+V_e(q),
\]
with a possibly nonconstant potential.  Third, the vertex maps act on classical
boundary covectors and define a measurable quotient dynamics, rather than a
unitary quantum scattering matrix or a discrete directed-bond map.

There is also a large operator-theoretic literature describing quantum-graph
vertex data through boundary conditions and scattering matrices.  Kostrykin and
Schrader analyze general boundary conditions for quantum wires and the
corresponding on-shell scattering matrices \cite{KostrykinSchrader1999}.
Harmer relates self-adjoint extension theory on graphs to Hermitian symplectic
geometry \cite{Harmer2000Hermitian}.  Kurasov--Nowaczyk and
Caudrelier--Ragoucy further develop vertex-scattering descriptions and
computations of scattering matrices for quantum graphs
\cite{KurasovNowaczyk2010,CaudrelierRagoucy2010}.  These results concern
linear quantum dynamics and self-adjoint realizations of differential
operators.  In contrast, the present paper treats a real classical phase-space
problem: given deterministic vertex maps preserving the classical boundary
energy, construct the corresponding global scattered evolution and prove
measure preservation.

The closest classical analogues are impact Hamiltonian systems, billiard-type
systems, and Hamiltonian scattering maps.  Impact Hamiltonian systems describe
smooth Hamiltonian motion interrupted by impacts at boundaries; see, for
example, the treatment of impact Hamiltonian systems and polygonal billiards by
Becker--Elliott--Firester--Gonen Cohen--Pnueli--Rom-Kedar
\cite{BeckerEtAl2024ImpactHamiltonian}.  Buslaev--Pushnitski study scattering maps in
Hamiltonian mechanics and relate scattering symplectomorphisms to phase-volume
formulae \cite{BuslaevPushnitski2010Scattering}.  The present setting is different:
the configuration space is a singular one-dimensional complex rather than a
smooth domain with reflecting boundary or an asymptotic scattering problem, and
the vertex rule may transmit an incoming trajectory to a different edge-end
rather than reflect it specularly.

Finally, branching conditions also appear in quantum and stochastic dynamics on
graphs.  Exner--\v{S}eba studied free quantum motion on a branching graph using
self-adjoint extensions \cite{ExnerSeba1989}.  Kostrykin--Potthoff--Schrader
characterized Brownian motions on metric graphs by Wentzell-type boundary
conditions at the vertices \cite{KostrykinPotthoffSchrader2012}.  These
stochastic and quantum theories are not the subject of the present paper, but
they reinforce the same structural point: the path metric alone does not
determine the law of motion at a vertex.

Accordingly, the contribution is not a new principle of vertex scattering on
graphs, but a self-contained classical Hamiltonian construction in a setting
where the vertex laws are merely measurable.  For continuous-time mechanical
motion with Hamiltonians \(p^2/2+V_e(q)\) on compact metric edges, prescribed
energy-preserving Borel covector scattering gives a regular quotient phase
space, a no-Zeno global existence theorem, and invariance of the quotient
Liouville measure.

\subsection{Organization}

Section~\ref{sec:graph} fixes the graph and boundary notation.
Section~\ref{sec:scattering} defines scattering laws, their speed
representation, reversibility, and boundary flux measures.
Section~\ref{sec:phase} constructs the quotient phase space, the regular
sector, and the quotient Liouville measure.  Section~\ref{sec:dynamics}
constructs scattered trajectories and proves the no-Zeno estimate.
Section~\ref{sec:measure} proves measurability and Liouville invariance.
Section~\ref{sec:energy-surfaces} treats invariant measures on regular energy
surfaces.  Section~\ref{sec:discussion} records limitations and possible
extensions.

\subsection{Guide to notation}

The formal definitions begin in Sec.~\ref{sec:graph}.  For orientation, we use
the following symbols throughout.  An edge is denoted by \(e\), and
\(q_e\in[0,\ell_e]\) is its coordinate.  A vertex is denoted by \(v\), and an
edge-end incident to \(v\) is denoted by \(\eta=(e,a)\), where
\(a\in\{0,\ell_e\}\).  The sign \(\sigma(\eta)\) records which endpoint is used:
\(\sigma=-1\) at \(q_e=0\) and \(\sigma=+1\) at \(q_e=\ell_e\).  Thus
\(\sigma(\eta)p>0\) means that the covector points into the vertex, while
\(\sigma(\eta)p<0\) means that it points away from the vertex.

The sets \(\din_v\) and \(\dout_v\) are the incoming and outgoing nonzero
boundary covectors at \(v\).  The speed variable is always \(r=|p|>0\).  The map
\(S_v:\din_v\to\dout_v\) is the prescribed vertex scattering law.  In speed
coordinates it has the form \((\eta,r)\mapsto(F_v(\eta,r),r)\), where
\(F_v(\cdot,r)\) is a permutation of the finite set of edge-ends at fixed speed.
Tildes, as in \(\widetilde{\Pcal}\), denote objects before the boundary
identifications.  The quotient phase space is \(\Pcal_\Scal\), the quotient map
is \(\pi\), and the superscript \(\Reg\) denotes the regular sector obtained by
excluding the vertex-potential energy levels \(V(v)\).

\section{Metric graphs and boundary covectors}\label{sec:graph}

\begin{definition}[Finite metric graph]
A finite metric graph \(\Gamma\) consists of a finite combinatorial graph with
vertex set \(\Vcal\), edge set \(\Ecal\), and a length \(\ell_e\in(0,\infty)\)
for each edge \(e\).  Each edge is represented by an interval \([0,\ell_e]\),
and endpoints are glued according to the incidence relation.  The distance is
the induced path metric.  Multiple edges and loops are allowed; for a loop based
at a vertex, its two endpoint germs are counted separately.
\end{definition}

\begin{definition}[Edgewise regularity]
Let \(k\in\N\).  A function \(V:\Gamma\to\R\) is edgewise \(C^k\) if it is
continuous on \(\Gamma\) and, for every edge \(e\), the restriction \(V_e\) to
\([0,\ell_e]\) admits a \(C^k\) extension to an open interval containing
\([0,\ell_e]\).
\end{definition}

Throughout the paper \(V\) is assumed continuous on \(\Gamma\) and edgewise
\(C^2\). 
This regularity assumption is mainly technical.  We expect that it can be
weakened using Carathéodory or Filippov solution concepts, but only under
additional hypotheses ensuring that the edge dynamics is still deterministic,
energy preserving, and measure preserving up to the first boundary hit.
On the edge phase cylinder \([0,\ell_e]\times\R\) we use coordinates
\((q_e,p)\) and Hamiltonian
\begin{equation}\label{eq:edge-Hamiltonian}
        H_e(q_e,p)=\frac12p^2+V_e(q_e).
\end{equation}

Fix a vertex \(v\).  Let \(\Ecal_v^\partial\) be the finite set of edge-ends
incident to \(v\).  An element is written
\[
        \eta=(e,a),\qquad a\in\{0,\ell_e\},
\]
where the endpoint \(q_e=a\) is glued to \(v\).  Define
\[
        \sigma(\eta)=
        \begin{cases}
        -1, & a=0,\\
        +1, & a=\ell_e.
        \end{cases}
\]
Thus \(\sigma(\eta)p>0\) means that the Hamiltonian velocity \(\dot q_e=p\)
points toward the vertex through the edge-end \(\eta\), while
\(\sigma(\eta)p<0\) means that it points away from the vertex.

\begin{definition}[Incoming and outgoing boundary covectors]
For \(v\in\Vcal\), set
\[
        \din_v=
        \{(\eta,p):\eta\in\Ecal_v^\partial,\ p\ne0,\ \sigma(\eta)p>0\},
\]
and
\[
        \dout_v=
        \{(\eta,p):\eta\in\Ecal_v^\partial,\ p\ne0,\ \sigma(\eta)p<0\}.
\]
The boundary energy is
\[
        H_v^\partial(\eta,p)=\frac12p^2+V(v).
\]
\end{definition}

For \(r>0\), define
\[
        \alpha_{\eta,r}^{\mathrm{in}}=(\eta,\sigma(\eta)r),
        \qquad
        \alpha_{\eta,r}^{\mathrm{out}}=(\eta,-\sigma(\eta)r).
\]
Equivalently, define the speed charts
\[
        \iota_{\mathrm{in}}:
        \Ecal_v^\partial\times(0,\infty)\longrightarrow\din_v,
        \qquad
        \iota_{\mathrm{in}}(\eta,r)=\alpha_{\eta,r}^{\mathrm{in}},
\]
and
\[
        \iota_{\mathrm{out}}:
        \Ecal_v^\partial\times(0,\infty)\longrightarrow\dout_v,
        \qquad
        \iota_{\mathrm{out}}(\eta,r)=\alpha_{\eta,r}^{\mathrm{out}}.
\]
We equip \(\Ecal_v^\partial\) with the discrete sigma-algebra and
\(\Ecal_v^\partial\times(0,\infty)\) with the corresponding product Borel
structure.  With these structures, \(\iota_{\mathrm{in}}\) and
\(\iota_{\mathrm{out}}\) are Borel isomorphisms.

\section{Scattering laws}\label{sec:scattering}

\begin{definition}[Deterministic energy-preserving scattering law]\label{def:scattering-law}
A deterministic scattering law at \(v\) is a Borel isomorphism
\[
        S_v:\din_v\longrightarrow\dout_v.
\]
It is energy-preserving if
\[
        H_v^\partial(S_v\alpha)=H_v^\partial(\alpha),
        \qquad \alpha\in\din_v.
\]
A deterministic energy-preserving scattering structure is a family
\(\Scal=(S_v)_{v\in\Vcal}\) of such maps.
\end{definition}

\begin{proposition}[Speed representation]\label{prop:speed-representation}
Let \(S_v:\din_v\to\dout_v\) be energy-preserving.  Then there is a unique map
\[
        F_v:\Ecal_v^\partial\times(0,\infty)\longrightarrow \Ecal_v^\partial
\]
such that
\[
        S_v(\alpha_{\eta,r}^{\mathrm{in}})
        =\alpha_{F_v(\eta,r),r}^{\mathrm{out}}.
\]
Moreover, \(S_v\) is bijective if and only if \(F_v(\cdot,r)\) is a bijection
of \(\Ecal_v^\partial\) for every \(r>0\).
\end{proposition}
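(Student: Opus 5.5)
The plan is to compose $S_v$ with the speed charts and read off the edge-end component. Set
\[
        \widehat S_v=\iota_{\mathrm{out}}^{-1}\circ S_v\circ\iota_{\mathrm{in}}:
        \Ecal_v^\partial\times(0,\infty)\longrightarrow\Ecal_v^\partial\times(0,\infty).
\]
This is well defined because $\iota_{\mathrm{in}}$ and $\iota_{\mathrm{out}}$ are bijections onto $\din_v$ and $\dout_v$; they are in fact Borel isomorphisms, as recorded in Sec.~\ref{sec:graph}. Write $\widehat S_v(\eta,r)=(F_v(\eta,r),\rho(\eta,r))$, where $F_v(\eta,r)$ is the edge-end coordinate and $\rho(\eta,r)$ is the speed coordinate.

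Energy preservation pins down $\rho$. The boundary energy of $\alpha^{\mathrm{in}}_{\eta,r}$ is $\tfrac12 r^2+V(v)$, and that of $\alpha^{\mathrm{out}}_{\eta',r'}$ is $\tfrac12 r'^2+V(v)$. Equality forces $r'^2=r^2$, and since both speeds are positive, $\rho(\eta,r)=r$. Hence
\[
        S_v(\alpha^{\mathrm{in}}_{\eta,r})=\alpha^{\mathrm{out}}_{F_v(\eta,r),r},
\]
which proves existence. For uniqueness, suppose $F$ and $F'$ both satisfy the identity. Then $\alpha^{\mathrm{out}}_{F(\eta,r),r}=\alpha^{\mathrm{out}}_{F'(\eta,r),r}$, and injectivity of $\iota_{\mathrm{out}}$ gives $F=F'$. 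The map $F_v$ is Borel, since it is a coordinate projection of the Borel map $\widehat S_v$, although the statement does not require this.

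For the equivalence, note that $S_v$ is bijective if and only if $\widehat S_v$ is, because the charts are bijections. Also, $\widehat S_v$ maps each fiber $\Ecal_v^\partial\times\{r\}$ into itself. So $\widehat S_v$ is the disjoint union over $r>0$ of the maps $\eta\mapsto F_v(\eta,r)$ on the finite fibers. A fiber-preserving map on a disjoint union is bijective exactly when its restriction to each fiber is a bijection of that fiber.

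In one direction, surjectivity of $\widehat S_v$ onto a given fiber $\Ecal_v^\partial\times\{r\}$ can only be achieved using preimages from the same fiber. Therefore $F_v(\cdot,r)$ is surjective, and since $\Ecal_v^\partial$ is finite, it is a bijection. In the other direction, if every $F_v(\cdot,r)$ is bijective, then $\widehat S_v$ is bijective, with inverse $(\eta',r)\mapsto(F_v(\cdot,r)^{-1}(\eta'),r)$.

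There is no real obstacle. The only point needing care is that energy preservation together with positivity of the speed yields $\rho(\eta,r)=r$ exactly, which is what makes the map fiber-preserving. The Borel isomorphism property is not needed for this equivalence, only bijectivity of $S_v$.
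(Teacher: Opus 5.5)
Your proof is correct and is essentially the paper's argument: conjugate $S_v$ by the speed charts, use energy preservation together with positivity of the speeds to get $s=r$, and read off both uniqueness and the bijectivity criterion from the fiber-preserving form $(\eta,r)\mapsto(F_v(\eta,r),r)$. You simply spell out the uniqueness and fiberwise bijectivity steps that the paper treats as immediate; note that your aside that $F_v$ is Borel presupposes $S_v$ is Borel, as in Definition~\ref{def:scattering-law}.
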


\begin{proof}
In speed coordinates, write
\[
        \iota_{\mathrm{out}}^{-1}S_v\iota_{\mathrm{in}}(\eta,r)=(\zeta,s).
\]
Energy preservation gives \(r^2=s^2\), hence \(s=r\), since \(r,s>0\).  This
proves the representation and uniqueness.  The bijectivity criterion follows
immediately because the coordinate representative has the form
\((\eta,r)\mapsto(F_v(\eta,r),r)\).
\end{proof}

\begin{lemma}[Borel criterion]\label{lem:borel-criterion}
Let \(S_v\) be an energy-preserving bijection with speed representation \(F_v\).
Then \(S_v\) is a Borel isomorphism if and only if, for all
\(\eta,\zeta\in\Ecal_v^\partial\), the set
\[
        \{r>0:F_v(\eta,r)=\zeta\}
\]
is Borel in \((0,\infty)\).
\end{lemma}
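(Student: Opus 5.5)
We reduce to the coordinate representative. Since \(\iota_{\mathrm{in}}\) and \(\iota_{\mathrm{out}}\) are Borel isomorphisms, \(S_v\) is a Borel isomorphism exactly when
\[
        T_v:=\iota_{\mathrm{out}}^{-1}S_v\iota_{\mathrm{in}}:\Ecal_v^\partial\times(0,\infty)\to\Ecal_v^\partial\times(0,\infty),\qquad T_v(\eta,r)=(F_v(\eta,r),r),
\]
is one. By Proposition~\ref{prop:speed-representation}, \(T_v\) is a bijection, since \(F_v(\cdot,r)\) is a permutation for each \(r\). Write \(B_{\eta\zeta}=\{r>0:F_v(\eta,r)=\zeta\}\).

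\emph{Necessity.} Suppose \(T_v\) is Borel. Then \(T_v^{-1}(\{\zeta\}\times(0,\infty))\) is Borel. Its slice over the discrete point \(\eta\) is exactly \(B_{\eta\zeta}\). Slices of Borel subsets of \(\Ecal_v^\partial\times(0,\infty)\) at a point of the finite discrete factor are Borel, because the product is a finite disjoint union of copies of \((0,\infty)\). Hence \(B_{\eta\zeta}\) is Borel.

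\emph{Sufficiency, forward map.} Suppose every \(B_{\eta\zeta}\) is Borel. Let \(A\subset\Ecal_v^\partial\times(0,\infty)\) be Borel, and write \(A=\bigsqcup_\zeta\{\zeta\}\times A_\zeta\) with each \(A_\zeta\) Borel. Then
\[
        T_v^{-1}(A)=\bigsqcup_{\eta,\zeta}\{\eta\}\times\bigl(B_{\eta\zeta}\cap A_\zeta\bigr).
\]
This is a finite union of Borel rectangles, so it is Borel. Thus \(T_v\) is Borel measurable.

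\emph{Sufficiency, inverse map.} The inverse is \(T_v^{-1}(\zeta,r)=(G_v(\zeta,r),r)\), where \(G_v(\cdot,r)=F_v(\cdot,r)^{-1}\). Its level sets are
\[
        \{r:G_v(\zeta,r)=\eta\}=B_{\eta\zeta},
\]
which are Borel by hypothesis. Applying the forward-map argument to \(G_v\) shows that \(T_v^{-1}\) is Borel measurable.

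Alternatively, one could observe that a Borel bijection between standard Borel spaces automatically has a Borel inverse (Souslin/Kuratowski). The explicit computation is more elementary and avoids that theorem, so we prefer it.

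The only point needing care is the bookkeeping that identifies Borel sets of the product with finite families of Borel sets of \((0,\infty)\). This holds because \(\Ecal_v^\partial\) is finite and discrete, so the product Borel structure is the disjoint-union structure. No step is expected to be a real obstacle.
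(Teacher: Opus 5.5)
Your proof is correct and follows the paper's route. Both pass to the coordinate representative \((\eta,r)\mapsto(F_v(\eta,r),r)\), take its inverse \((\zeta,r)\mapsto(F_v(\cdot,r)^{-1}\zeta,r)\), and use finiteness of \(\Ecal_v^\partial\) to reduce measurability of both maps to Borelness of the sets \(\{r>0:F_v(\eta,r)=\zeta\}\); you simply write out the slice and preimage computations that the paper compresses into a single sentence.
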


\begin{proof}
The product Borel structure is generated by sets \(\{\zeta\}\times B\).  The
coordinate representative \((\eta,r)\mapsto(F_v(\eta,r),r)\) has inverse
\((\zeta,r)\mapsto(F_v(\cdot,r)^{-1}\zeta,r)\).  Since \(\Ecal_v^\partial\) is
finite, measurability of both maps is exactly the stated Borel condition.
\end{proof}

Momentum reversal on the prequotient phase space is denoted by
\(\widetilde\rho\). On an edge cylinder,
\(
        \widetilde\rho(q,p)=(q,-p),
\)
and on boundary covectors
\(
        \widetilde\rho(\eta,p)=(\eta,-p).
\)
It exchanges \(\din_v\) and \(\dout_v\) and preserves \(H_v^\partial\).

\begin{definition}[Reversible scattering]\label{def:reversible-scattering}
A scattering law \(S_v:\din_v\to\dout_v\) is reversible if
\begin{equation}\label{eq:reversible-scattering}
        S_v^{-1}=\widetilde\rho\circ S_v\circ\widetilde\rho
        \qquad\text{as maps }\dout_v\to\din_v.
\end{equation}
\end{definition}

\begin{lemma}[Reversibility as a speedwise involution]\label{lem:reversibility-involution}
Let \(S_v\) be an energy-preserving bijection with speed representation \(F_v\).
Then \(S_v\) is reversible if and only if
\[
        F_v(F_v(\eta,r),r)=\eta
        \qquad (\eta\in\Ecal_v^\partial,\ r>0).
\]
Equivalently, for each fixed \(r>0\), \(F_v(\cdot,r)\) is an involution.
\end{lemma}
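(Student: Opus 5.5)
The plan is to convert the operator identity \eqref{eq:reversible-scattering} into speed coordinates using the charts \(\iota_{\mathrm{in}},\iota_{\mathrm{out}}\) together with Proposition~\ref{prop:speed-representation}, and then compare the two sides pointwise.

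First I will record how \(\widetilde\rho\) acts on the charts. Since \(\widetilde\rho(\eta,p)=(\eta,-p)\), we have
\[
\widetilde\rho(\alpha^{\mathrm{in}}_{\eta,r})=\alpha^{\mathrm{out}}_{\eta,r}
\quad\text{and}\quad
\widetilde\rho(\alpha^{\mathrm{out}}_{\eta,r})=\alpha^{\mathrm{in}}_{\eta,r}.
\]
So \(\widetilde\rho\) is the identity in speed coordinates and only exchanges the roles of \(\din_v\) and \(\dout_v\).

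Next I will compute both sides of \eqref{eq:reversible-scattering} on an arbitrary outgoing covector \(\alpha^{\mathrm{out}}_{\zeta,r}\in\dout_v\). For the right-hand side, the chart formulas and the speed representation give
\[
\widetilde\rho\, S_v\,\widetilde\rho(\alpha^{\mathrm{out}}_{\zeta,r})
=\widetilde\rho\, S_v(\alpha^{\mathrm{in}}_{\zeta,r})
=\widetilde\rho(\alpha^{\mathrm{out}}_{F_v(\zeta,r),r})
=\alpha^{\mathrm{in}}_{F_v(\zeta,r),r}.
\]
For the left-hand side, Proposition~\ref{prop:speed-representation} says \(F_v(\cdot,r)\) is a bijection, because \(S_v\) is one. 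Hence
\[
S_v^{-1}(\alpha^{\mathrm{out}}_{\zeta,r})=\alpha^{\mathrm{in}}_{F_v(\cdot,r)^{-1}\zeta,\,r}.
\]

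Since \(\iota_{\mathrm{in}}\) is injective, reversibility is therefore equivalent to
\[
F_v(\cdot,r)^{-1}\zeta=F_v(\zeta,r)\qquad\text{for all }\zeta\in\Ecal_v^\partial,\ r>0.
\]
That is, \(F_v(\cdot,r)^{-1}=F_v(\cdot,r)\) for every \(r>0\). For a bijection of a finite set this is the same as \(F_v(\cdot,r)^2=\id\), i.e. \(F_v(F_v(\eta,r),r)=\eta\) for all \(\eta\) and all \(r>0\). Both directions of the equivalence hold at once, because every \(\beta\in\dout_v\) has the form \(\alpha^{\mathrm{out}}_{\zeta,r}\), as \(\iota_{\mathrm{out}}\) is surjective.

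I expect no step to be a genuine obstacle. The only care needed is bookkeeping: checking that \(\widetilde\rho\) keeps the same edge-end \(\eta\) and speed \(r\) while swapping the in and out charts, and noting that the speed \(r\) is preserved throughout, so the whole comparison takes place at fixed \(r\). Measurability plays no role here, since the statement is purely set-theoretic.
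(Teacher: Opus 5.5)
Your proposal is correct and follows the paper's argument. The paper likewise observes that \(\widetilde\rho\) is the identity in speed coordinates and only swaps the incoming and outgoing charts, so reversibility becomes \(T_v^{-1}=T_v\) for \(T_v(\eta,r)=(F_v(\eta,r),r)\), i.e.\ \(T_v^2=\id\); you carry out the same chart computation explicitly on \(\alpha^{\mathrm{out}}_{\zeta,r}\). As a minor remark, finiteness of \(\Ecal_v^\partial\) is not needed for the step \(f^{-1}=f\Leftrightarrow f^2=\id\).
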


\begin{proof}
In incoming/outgoing speed coordinates, momentum reversal is the identity on
\(\Ecal_v^\partial\times(0,\infty)\) and only exchanges the two charts.  Thus
\(S_v^{-1}=\widetilde\rho S_v\widetilde\rho\) is equivalent to \(T_v^{-1}=T_v\), where
\(T_v(\eta,r)=(F_v(\eta,r),r)\).  This is equivalent to \(T_v^2=\id\), namely
to the displayed condition.
\end{proof}

\begin{definition}[Boundary flux measures]\label{def:boundary-flux}
The incoming and outgoing flux measures at \(v\) are
\[
        \mu_v^{\mathrm{in}}(A)
        =\sum_{\eta\in\Ecal_v^\partial}\int_0^\infty
        \mathbf 1_A(\alpha_{\eta,r}^{\mathrm{in}})r\,dr,
\]
and
\[
        \mu_v^{\mathrm{out}}(B)
        =\sum_{\eta\in\Ecal_v^\partial}\int_0^\infty
        \mathbf 1_B(\alpha_{\eta,r}^{\mathrm{out}})r\,dr.
\]
\end{definition}

\begin{lemma}[Flux preservation]\label{lem:flux-preservation}
Every deterministic energy-preserving scattering law satisfies
\[
        (S_v)_\#\mu_v^{\mathrm{in}}=\mu_v^{\mathrm{out}}.
\]
\end{lemma}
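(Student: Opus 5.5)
Everything will be pulled back to speed coordinates, where the statement becomes a fixed-speed permutation invariance of counting measure times \(r\,dr\).  By construction, \(\mu_v^{\mathrm{in}}=(\iota_{\mathrm{in}})_\#\nu\) and \(\mu_v^{\mathrm{out}}=(\iota_{\mathrm{out}})_\#\nu\), where \(\nu\) is the measure on \(\Ecal_v^\partial\times(0,\infty)\) given by counting measure on the finite set \(\Ecal_v^\partial\) times \(r\,dr\).  Since \(\iota_{\mathrm{in}}\) and \(\iota_{\mathrm{out}}\) are Borel isomorphisms, it suffices to prove \(T_\#\nu=\nu\) for the coordinate representative
\[
        T=\iota_{\mathrm{out}}^{-1}S_v\iota_{\mathrm{in}},\qquad
        T(\eta,r)=(F_v(\eta,r),r),
\]
supplied by Proposition~\ref{prop:speed-representation}.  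Indeed, \((S_v)_\#\mu_v^{\mathrm{in}}=(\iota_{\mathrm{out}}T)_\#\nu=(\iota_{\mathrm{out}})_\#T_\#\nu\).  The map \(T\) is a Borel isomorphism because \(S_v\) is one.

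Next, fix a Borel set \(C\subset\Ecal_v^\partial\times(0,\infty)\) and write its sections as \(C_\zeta=\{r:(\zeta,r)\in C\}\).  Then
\[
        \nu(T^{-1}C)=\int_0^\infty \#\{\eta:(F_v(\eta,r),r)\in C\}\,r\,dr .
\]
For each fixed \(r\), the map \(F_v(\cdot,r)\) is a bijection of \(\Ecal_v^\partial\) by Proposition~\ref{prop:speed-representation}.  Hence the integrand equals \(\#\{\zeta:(\zeta,r)\in C\}=\sum_\zeta\mathbf 1_{C_\zeta}(r)\), and integrating gives \(\nu(C)\).

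The only point needing care is measurability, so that the integrand is Borel in \(r\) and the exchange of the finite sum with the integral is legitimate.  I would handle this with Lemma~\ref{lem:borel-criterion}.  The sets \(R_{\eta\zeta}=\{r:F_v(\eta,r)=\zeta\}\) are Borel, so
\[
        \mathbf 1_{T^{-1}C}(\eta,r)=\sum_\zeta \mathbf 1_{R_{\eta\zeta}}(r)\,\mathbf 1_{C_\zeta}(r)
\]
is Borel.  Summing over \(\eta\) and using that \(\sum_\eta\mathbf 1_{R_{\eta\zeta}}(r)=1\) (bijectivity at fixed \(r\)) yields the identity pointwise in \(r\), and Tonelli's theorem finishes the argument.

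I expect no real obstacle: the content is simply that a speedwise permutation of finitely many edge-ends preserves counting measure, while energy preservation keeps the \(r\,dr\) factor fixed.
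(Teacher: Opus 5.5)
Your proposal is correct and follows the paper's proof. Both arguments pass to speed coordinates, where the two flux measures become counting measure on \(\Ecal_v^\partial\) times \(r\,dr\). Both then use that \(S_v\) acts there as \((\eta,r)\mapsto(F_v(\eta,r),r)\), with \(F_v(\cdot,r)\) a permutation at each fixed speed.

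Your fiberwise counting computation spells out the invariance step that the paper asserts in one line. The extra measurability discussion is harmless but not needed: \(T\) is already Borel, so \(T^{-1}C\) is Borel.
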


\begin{proof}
In speed coordinates the two flux measures are both counting measure on
\(\Ecal_v^\partial\) times \(r\,dr\).  The coordinate form of \(S_v\) is
\((\eta,r)\mapsto(F_v(\eta,r),r)\), and \(F_v(\cdot,r)\) is a permutation of the
finite set \(\Ecal_v^\partial\).  The product measure is therefore invariant.
\end{proof}

\begin{example}[Basic reversible laws]
Reflection is given by \(F_v(\eta,r)=\eta\).  Transmission through a two-valent
vertex is given by the transposition of its two edge-ends.  More generally, a
reversible deterministic law is, at each fixed speed, a product of fixed points
(reflections) and transpositions (transmissions).  A nonreflecting reversible
law can therefore exist only at even valence.
\end{example}

\begin{example}[A nonreversible cyclic law]
Let \(v\) be a vertex of valence three, with edge-ends
\(\eta_1,\eta_2,\eta_3\).  In speed coordinates define
\[
        F_v(\eta_1,r)=\eta_2,\qquad
        F_v(\eta_2,r)=\eta_3,\qquad
        F_v(\eta_3,r)=\eta_1
\]
for every \(r>0\).  This gives an energy-preserving Borel scattering law,
since for each fixed speed it is a permutation of the edge-ends.  It is not
reversible, because the permutation is a three-cycle rather than an involution.
The main construction still gives a deterministic measure-preserving scattered
flow, but the momentum-reversal symmetry of
Corollary~\ref{cor:global-reversibility} does not hold.
\end{example}

\section{The quotient scattering phase space}\label{sec:phase}

Let
\[
        \widetilde\Pcal=
        \coprod_{e\in\Ecal}([0,\ell_e]\times\R),
        \qquad
        \widetilde\Pcal^\circ=
        \coprod_{e\in\Ecal}((0,\ell_e)\times\R).
\]
Remove zero-momentum boundary points and set
\[
        \widetilde\Pcal^\times=
        \coprod_{e\in\Ecal}
        \bigl([0,\ell_e]\times\R\setminus(\{0,\ell_e\}\times\{0\})\bigr).
\]

We identify a boundary covector \((\eta,p)\), with \(\eta=(e,a)\), with the
boundary point \((a,p)\) in the \(e\)-component of
\(\widetilde\Pcal^\times\).  Under this convention, \(\din_v\) and
\(\dout_v\) are regarded as boundary subsets of \(\widetilde\Pcal^\times\).

Given \(\Scal=(S_v)_{v\in\Vcal}\), let \(\sim_\Scal\) be the equivalence
relation generated by
\[
        \alpha\sim_\Scal S_v\alpha,
        \qquad \alpha\in\din_v.
\]
All nontrivial equivalence classes are two-point classes
\(\{\alpha,S_v\alpha\}\).

\begin{definition}[Scattering phase space]\label{def:phase-space}
The scattering phase space is
\[
        \Pcal_\Scal=\widetilde\Pcal^\times/\sim_\Scal,
\]
with quotient map \(\pi:\widetilde\Pcal^\times\to\Pcal_\Scal\).  It is endowed
with the quotient measurable structure: \(A\subset\Pcal_\Scal\) is measurable
if and only if \(\pi^{-1}(A)\) is Borel in \(\widetilde\Pcal^\times\).
\end{definition}

\begin{lemma}[Finite quotient measurability]\label{lem:quotient-measurability}
If \(B\subset\widetilde\Pcal^\times\) is Borel, then its saturation
\(\pi^{-1}(\pi(B))\) is Borel.  Consequently \(\pi(B)\) is measurable in
\(\Pcal_\Scal\).
\end{lemma}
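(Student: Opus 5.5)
The plan is to write the saturation explicitly as a finite union of Borel sets, using only that each \(S_v\) is a Borel isomorphism between Borel subsets of \(\widetilde\Pcal^\times\). Let \(D^{\mathrm{in}}=\bigcup_v\din_v\) and \(D^{\mathrm{out}}=\bigcup_v\dout_v\). For each vertex \(v\), the set \(\din_v\) is a finite union over \(\eta\in\Ecal_v^\partial\) of boundary rays \(\{a\}\times(0,\infty)\) or \(\{a\}\times(-\infty,0)\) in the \(e\)-component, so it is Borel in \(\widetilde\Pcal^\times\); the same holds for \(\dout_v\). Under the identification of boundary covectors with boundary points, the Borel structure these sets inherit from \(\widetilde\Pcal^\times\) coincides with the one transported by the speed charts \(\iota_{\mathrm{in}},\iota_{\mathrm{out}}\). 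This is a quick check on each ray separately, since each ray is a copy of \((0,\infty)\).

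Since the generating relation only pairs \(\alpha\in\din_v\) with \(S_v\alpha\in\dout_v\), and the \(S_v\) are bijections, every class is either a singleton or a two-point class \(\{\alpha,S_v\alpha\}\). Two-point classes do not chain. A boundary point \((a,p)\) with \(p\neq0\) at an edge-end \(\eta\) glued to \(v\) lies in exactly one of \(\din_v\) and \(\dout_v\), according to the sign of \(\sigma(\eta)p\). For a loop the two endpoint germs are distinct points of \(\widetilde\Pcal^\times\), so no point is involved in two generating pairs. Consequently
\[
\pi^{-1}(\pi(B)) \;=\; B\;\cup\;\bigcup_{v\in\Vcal} S_v\bigl(B\cap\din_v\bigr)\;\cup\;\bigcup_{v\in\Vcal} S_v^{-1}\bigl(B\cap\dout_v\bigr).
\]
To verify this, note that a point \(\beta\) in the saturation is either in \(B\) or equivalent to some \(\alpha\in B\) with \(\alpha\neq\beta\). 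The latter forces either \(\alpha\in\din_v\) with \(\beta=S_v\alpha\), or \(\alpha\in\dout_v\) with \(\beta=S_v^{-1}\alpha\). The reverse inclusion is immediate from the definition of \(\sim_\Scal\).

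Each term on the right is Borel. The set \(B\cap\din_v\) is Borel in \(\din_v\), and its image under the Borel isomorphism \(S_v\), namely the preimage under the Borel map \(S_v^{-1}\), is Borel in \(\dout_v\). Because \(\dout_v\) is itself Borel in \(\widetilde\Pcal^\times\), this image is Borel in \(\widetilde\Pcal^\times\). The same argument applies to the \(S_v^{-1}\) terms, and a finite union of Borel sets is Borel. The final claim then follows directly from Definition~\ref{def:phase-space}: \(\pi(B)\) is measurable precisely because \(\pi^{-1}(\pi(B))\) is Borel.

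The only step needing care is the compatibility of Borel structures in the first paragraph. It is needed so that "Borel isomorphism \(S_v\)" in Definition~\ref{def:scattering-law}, which is phrased via the speed charts, really transports Borel subsets of \(\widetilde\Pcal^\times\) to Borel subsets of \(\widetilde\Pcal^\times\). I expect this to be routine, since each \(\iota\) restricted to a fixed \(\eta\) is an affine homeomorphism \(r\mapsto\pm r\) onto an open ray. Lemma~\ref{lem:borel-criterion} can be cited as an alternative description of the same fact.
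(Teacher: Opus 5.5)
Your proposal is correct and follows essentially the same route as the paper: you write the saturation as \(B\) together with the images of its boundary pieces under \(S_v\) and \(S_v^{-1}\), which is a finite union of Borel sets. You also make explicit two details the paper leaves implicit: classes have at most two points because the sets \(\din_v,\dout_w\) are pairwise disjoint, and the subspace Borel structure on the boundary rays agrees with the one transported by the speed charts.
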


\begin{proof}
The saturation is obtained from \(B\) by adjoining, at finitely many vertices,
the images under \(S_v\) and \(S_v^{-1}\) of the corresponding boundary pieces
of \(B\).  These pieces are Borel and the scattering maps are Borel
isomorphisms.  A finite union of such sets is Borel.
\end{proof}

The edgewise Hamiltonian \(\widetilde H\) on \(\widetilde\Pcal^\times\) is
\(H_e\) on the \(e\)-component.  Energy preservation of the \(S_v\)'s implies
that \(\widetilde H\) is constant on equivalence classes.  Hence there is a
unique measurable function
\[
        H:\Pcal_\Scal\to\R,
        \qquad H\circ\pi=\widetilde H.
\]

\begin{definition}[Regular phase space]\label{def:regular-phase-space}
The regular scattering phase space is
\[
        \Pcal_\Scal^{\Reg}
        =\{z\in\Pcal_\Scal:H(z)\ne V(v)\text{ for every }v\in\Vcal\}.
\]
\end{definition}

At a boundary point over \(v\), the equality \(H_v^\partial(\eta,p)=V(v)\) is
equivalent to \(p=0\).  Thus every vertex impact in \(\Pcal_\Scal^{\Reg}\) has
nonzero boundary momentum.

Define the positive and negative boundary representatives
\[
        \widetilde\Pcal^+=\widetilde\Pcal^\circ\cup\coprod_{v\in\Vcal}\dout_v,
        \qquad
        \widetilde\Pcal^-=\widetilde\Pcal^\circ\cup\coprod_{v\in\Vcal}\din_v.
\]

\begin{lemma}[Outgoing and incoming quotient charts]\label{lem:quotient-charts}
The restrictions
\[
        \pi_+:=\pi|_{\widetilde\Pcal^+}:\widetilde\Pcal^+\to\Pcal_\Scal,
        \qquad
        \pi_-:=\pi|_{\widetilde\Pcal^-}:\widetilde\Pcal^-\to\Pcal_\Scal
\]
are bimeasurable bijections.
\end{lemma}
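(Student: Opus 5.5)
We prove the statement for \(\pi_+\); the case of \(\pi_-\) is identical after exchanging the roles of \(\din_v\) and \(\dout_v\) and of \(S_v\) and \(S_v^{-1}\).

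\emph{Step 1: structure of the equivalence classes.} First we record which points of \(\widetilde\Pcal^\times\) are nontrivially identified. The sets \(\din_v\) and \(\dout_v\), for \(v\in\Vcal\), are pairwise disjoint subsets of \(\widetilde\Pcal^\times\setminus\widetilde\Pcal^\circ\). Disjointness holds because a boundary point \((a,p)\) of the \(e\)-component determines the edge-end \(\eta=(e,a)\), hence the vertex \(v\). Its sign \(\sigma(\eta)p\ne0\) then decides whether it lies in \(\din_v\) or in \(\dout_v\). Together these sets exhaust the boundary part of \(\widetilde\Pcal^\times\), since boundary points with \(p=0\) were removed. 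Consequently the classes of \(\sim_\Scal\) are exactly of two kinds: singletons \(\{x\}\) with \(x\in\widetilde\Pcal^\circ\), and pairs \(\{\alpha,S_v\alpha\}\) with \(\alpha\in\din_v\) and \(S_v\alpha\in\dout_v\). Because each \(S_v\) is a bijection \(\din_v\to\dout_v\), every point of \(\dout_v\) lies in exactly one such pair.

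\emph{Step 2: \(\pi_+\) is a bijection.} Each class meets \(\widetilde\Pcal^+=\widetilde\Pcal^\circ\cup\coprod_v\dout_v\) in exactly one point. For a singleton this point is the interior point itself. For a pair \(\{\alpha,S_v\alpha\}\) it is \(S_v\alpha\). Hence \(\pi_+\) is surjective and injective. The inverse map is explicit. Let \(R_+:\widetilde\Pcal^\times\to\widetilde\Pcal^+\) be the identity on \(\widetilde\Pcal^+\) and equal to \(S_v\) on \(\din_v\). Then \(\pi_+^{-1}\circ\pi=R_+\).

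\emph{Step 3: measurability in both directions.} This is the only point requiring care, and it reduces to the Borel property of the \(S_v\) together with Lemma~\ref{lem:quotient-measurability}.

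For measurability of \(\pi_+\), take \(B\subset\widetilde\Pcal^+\) Borel. Then \(\pi_+(B)=\pi(B)\), which is measurable in \(\Pcal_\Scal\) by Lemma~\ref{lem:quotient-measurability}. The set \(\widetilde\Pcal^+\) itself is Borel in \(\widetilde\Pcal^\times\), being a finite union of the open interior and of the Borel sets \(\dout_v\).

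For measurability of \(\pi_+^{-1}\), let \(A\subset\Pcal_\Scal\) be measurable, so \(C=\pi^{-1}(A)\) is Borel. Then \((\pi_+^{-1})^{-1}(C\cap\widetilde\Pcal^+)=A\), which is measurable, so we must show that the measurable sets of \(\Pcal_\Scal\) pull back correctly, i.e. that \(\pi_+\) is measurable. We have \(\pi_+^{-1}(A)=C\cap\widetilde\Pcal^+\), which is Borel. Conversely, for \(B\subset\widetilde\Pcal^+\) Borel, the set \(\pi^{-1}(\pi_+(B))=B\cup\bigcup_v S_v^{-1}(B\cap\dout_v)\) is Borel because each \(S_v^{-1}\) is Borel. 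Hence \(\pi_+^{-1}\) is measurable as well.

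Thus \(\pi_+\) is a bimeasurable bijection, and the same argument with \(S_v^{-1}\) in place of \(S_v\) gives the claim for \(\pi_-\). No step here is a serious obstacle beyond the disjointness bookkeeping in Step 1.
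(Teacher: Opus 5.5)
Your proof is correct and follows the paper's argument: each class meets \(\widetilde\Pcal^{\pm}\) exactly once, the identity \(\pi_\pm^{-1}(A)=\pi^{-1}(A)\cap\widetilde\Pcal^{\pm}\) gives measurability of \(\pi_\pm\), and Lemma~\ref{lem:quotient-measurability} gives measurability of the inverse; you in effect reprove that lemma via \(\pi^{-1}(\pi_+(B))=B\cup\bigcup_v S_v^{-1}(B\cap\dout_v)\). The labels in your Step~3 are swapped in places: showing that \(\pi_+(B)\) is measurable for Borel \(B\) is measurability of \(\pi_+^{-1}\), not of \(\pi_+\). Both required facts are nonetheless established.
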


\begin{proof}
Each equivalence class contains exactly one point of \(\widetilde\Pcal^+\) and
exactly one point of \(\widetilde\Pcal^-\).  Hence the restrictions are
bijections.  Measurability follows from the quotient definition.  If
\(B\subset\widetilde\Pcal^\pm\) is Borel, then \(\pi(B)\) is measurable by
Lemma~\ref{lem:quotient-measurability}; hence the inverse maps are measurable.
\end{proof}

We write
\[
        j_+:=\pi_+^{-1},
        \qquad
        j_-:=\pi_-^{-1}.
\]

\begin{definition}[Quotient Liouville measure]\label{def:liouville-measure}
Let \(\lambda\) be the measure on \(\widetilde\Pcal^\times\) obtained by summing
Lebesgue measure \(dq_e\,dp\) over the finitely many edge cylinders.  The
quotient Liouville measure is
\[
        m=\pi_\#\lambda.
\]
Since \(\widetilde\Pcal^\times\setminus\widetilde\Pcal^\circ\) is contained in
a finite union of boundary lines, it is \(\lambda\)-null.  Thus
\[
        m(A)=\lambda\bigl(\pi^{-1}(A)\cap\widetilde\Pcal^\circ\bigr)
\]
for every measurable \(A\subset\Pcal_\Scal\).
\end{definition}

\begin{lemma}[Momentum reversal on the quotient]\label{lem:rho-descends}
If \(\Scal\) is reversible, then componentwise momentum reversal
\(\widetilde\rho\) descends to a measurable involution
\(\rho:\Pcal_\Scal\to\Pcal_\Scal\) satisfying
\[
        \rho\circ\pi=\pi\circ\widetilde\rho,
        \qquad H\circ\rho=H.
\]
It preserves \(\Pcal_\Scal^{\Reg}\).
\end{lemma}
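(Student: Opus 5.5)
The plan is to show that \(\widetilde\rho\) respects \(\sim_\Scal\). Then it descends set-theoretically, and afterwards I check measurability, the involution property, energy invariance, and preservation of the regular sector.

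\emph{Compatibility with the relation.} Since \(\sim_\Scal\) is generated by the two-point classes \(\{\alpha,S_v\alpha\}\) with \(\alpha\in\din_v\), it suffices to show that \(\widetilde\rho\alpha\sim_\Scal\widetilde\rho S_v\alpha\). Put \(\beta=\widetilde\rho S_v\alpha\). Because \(\widetilde\rho\) exchanges \(\dout_v\) and \(\din_v\), we have \(\beta\in\din_v\). By reversibility~\eqref{eq:reversible-scattering}, \(\widetilde\rho\circ S_v\circ\widetilde\rho=S_v^{-1}\) on \(\dout_v\). Applying this to \(S_v\alpha\in\dout_v\) gives \(\widetilde\rho S_v\widetilde\rho(S_v\alpha)=\alpha\). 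Applying \(\widetilde\rho\) to both sides yields \(S_v\beta=\widetilde\rho\alpha\). Hence \(\beta\sim_\Scal\widetilde\rho\alpha\), which is exactly the required relation.

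Interior points have trivial classes, and \(\widetilde\rho\) maps \(\widetilde\Pcal^\times\) to itself because it fixes the deleted zero-momentum boundary points. Therefore \(\widetilde\rho\) maps equivalence classes onto equivalence classes. This defines a map \(\rho\) with \(\rho\circ\pi=\pi\circ\widetilde\rho\). Since \(\widetilde\rho^2=\id\), we get \(\rho^2\circ\pi=\pi\). Surjectivity of \(\pi\) then gives \(\rho^2=\id\).

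\emph{Measurability.} Let \(A\subset\Pcal_\Scal\) be measurable, so \(\pi^{-1}(A)\) is Borel. Then
\[
\pi^{-1}(\rho^{-1}A)=\widetilde\rho^{-1}(\pi^{-1}A).
\]
This is Borel, since \(\widetilde\rho\) is a homeomorphism of each edge cylinder preserving \(\widetilde\Pcal^\times\). So \(\rho^{-1}A\) is measurable by Definition~\ref{def:phase-space}. Because \(\rho\) is an involution, it is bimeasurable.

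\emph{Energy and the regular sector.} On each edge cylinder \(\widetilde H\circ\widetilde\rho=\widetilde H\), since \(H_e\) is even in \(p\). Thus \(H\circ\rho\circ\pi=H\circ\pi\), and surjectivity of \(\pi\) gives \(H\circ\rho=H\). The regular sector \(\Pcal_\Scal^{\Reg}\) is defined only through the value of \(H\), so \(\rho\) preserves it.

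I expect no serious obstacle. The one delicate point is the direction bookkeeping in the compatibility step: one must use reversibility as a statement on \(\dout_v\) and correctly track that \(\widetilde\rho\) swaps the incoming and outgoing sets.
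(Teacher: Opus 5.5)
Your proof is correct and follows the same route as the paper. Like the paper, you use reversibility to show that \(\widetilde\rho\) sends each generating pair \(\{\alpha,S_v\alpha\}\) to a related pair \(\{S_v\beta,\beta\}\), and then read off the remaining claims: the involution property, measurability, \(H\circ\rho=H\), and preservation of the regular sector. You also spell out details the paper leaves implicit, namely the preimage identity \(\pi^{-1}(\rho^{-1}A)=\widetilde\rho^{-1}(\pi^{-1}A)\) and the fact that \(\widetilde\rho\) preserves \(\widetilde\Pcal^\times\).
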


\begin{proof}
It suffices to check that \(\widetilde\rho\) respects the equivalence relation.
Let \(\beta=S_v\alpha\).  By reversibility,
\(
        S_v(\widetilde\rho\beta)=\widetilde\rho\alpha,
\)
so \(\widetilde\rho\alpha\sim_\Scal\widetilde\rho\beta\).  Hence
\(
        \rho([x]):=[\widetilde\rho x]
\)
is well-defined and satisfies
\(
        \rho\circ\pi=\pi\circ\widetilde\rho .
\)
Measurability follows from the quotient measurable structure.  Since
\(\widetilde\rho^2=\id\), the descended map is an involution.  Finally,
\(H\circ\rho=H\) follows from the evenness of \(p^2/2\), and therefore
\(\rho\) preserves \(\Pcal_\Scal^{\Reg}\).
\end{proof}

\section{Scattered trajectories and no-Zeno estimates}\label{sec:dynamics}

For each edge choose once and for all a \(C^2\) extension \(\widetilde V_e\) to
an open interval \(I_e\supset[0,\ell_e]\), and let \(\phi_t^{(e)}\) be the
local flow of
\[
        \dot q=p,\qquad \dot p=-\widetilde V_e'(q).
\]
Inside the original edge cylinder, the solution is independent of the chosen
extension by uniqueness of ODE solutions.

\begin{lemma}[Edge flow]\label{lem:edge-flow}
On each edge cylinder, solutions of the edge equations are unique up to the
first time at which \(q\) reaches \(0\) or \(\ell_e\).  Along such a solution,
\(H_e\) and the measure \(dq_e\,dp\) are preserved.
\end{lemma}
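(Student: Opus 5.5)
The statement has three parts, and I would prove each from standard facts about the extended local flow \(\phi_t^{(e)}\) on \(I_e\times\R\).

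\emph{Uniqueness.} Since \(\widetilde V_e\) is \(C^2\), the vector field \((p,-\widetilde V_e'(q))\) is \(C^1\), hence locally Lipschitz on \(I_e\times\R\). Picard--Lindelöf then gives uniqueness of maximal solutions of the extended system. Any solution of the edge equations that stays in \([0,\ell_e]\times\R\) on a time interval is also a solution of the extended system, because \(\widetilde V_e=V_e\) there. Two solutions starting from the same point therefore agree until either one first has \(q\in\{0,\ell_e\}\). Up to that time, or on the closed interval up to it, they coincide with the restriction of \(\phi_t^{(e)}\). This also shows the solution does not depend on the chosen extension.

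\emph{Energy conservation.} Differentiate along a solution:
\[
\frac{d}{dt}H_e=p\dot p+V_e'(q)\dot q=-pV_e'(q)+V_e'(q)p=0 .
\]
Continuity extends this equality to the endpoint times.

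\emph{Measure preservation.} This step needs the most care, because "preserved up to the first hit" must be stated for sets, not points. Let \(U_t\) be the set of \((q,p)\in(0,\ell_e)\times\R\) whose trajectory remains in the open cylinder on \([0,t]\) (or on \([t,0]\) for \(t<0\)). Then:
\begin{enumerate}[label=(\roman*)]
\item \(U_t\) is open, by continuous dependence on initial data together with compactness of \([0,t]\).
\item \(\phi_t^{(e)}\) is a \(C^1\) diffeomorphism from \(U_t\) onto \(U_{-t}\).
\item The vector field is divergence free, since \(\partial_q p+\partial_p(-\widetilde V_e'(q))=0\).
\end{enumerate}
By Liouville's formula, \(\det D\phi_t^{(e)}=\exp\int_0^t\operatorname{div}=1\). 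The change-of-variables formula then gives \(\lambda(\phi_t^{(e)}(A))=\lambda(A)\) for every Borel \(A\subset U_t\). The only delicate point is checking openness and the precise domains so that the change of variables is legitimate; the determinant identity itself is routine. Boundary points of the cylinder can be ignored, because they form a \(\lambda\)-null set, as already noted in Definition~\ref{def:liouville-measure}.
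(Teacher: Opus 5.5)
Your proof is correct and follows the paper's argument. You use Picard--Lindel\"{o}f for the $C^1$ extended field $(p,-\widetilde V_e'(q))$ and restrict to the edge, differentiate directly to get conservation of $H_e$, and invoke Liouville for $dq_e\,dp$ via $\operatorname{div}=0$ and $\det D\phi_t^{(e)}=1$, which in dimension two is the paper's appeal to preservation of $dq_e\wedge dp$. Your bookkeeping of the domains, $\phi_t^{(e)}\colon U_t\to U_{-t}$ with $U_t$ open, makes precise what ``preserved up to the first hit'' means for sets, which the paper leaves implicit.
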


\begin{proof}
Fix an edge \(e\), and use the chosen \(C^2\) extension
\(\widetilde V_e\) to an open interval containing \([0,\ell_e]\).  The extended
edge vector field is
\[
        X_e(q,p)=(p,-\widetilde V_e'(q)).
\]
It is \(C^1\), hence locally Lipschitz.  The standard Picard--Lindelöf
existence and uniqueness theorem for first-order systems therefore gives a
unique maximal solution through every initial condition in the extended edge
cylinder; see, for example, Sec.~10 of Ref.~\onlinecite{WalterODE}.  Restricting this solution to the connected time
interval on which \(q(t)\in[0,\ell_e]\) gives the maximal edgewise solution.
The restriction is independent of the chosen extension, since any two
extensions give the same vector field on \([0,\ell_e]\times\mathbb R\), and
ODE uniqueness identifies the corresponding solutions as long as they remain
inside the edge.

Energy conservation follows by differentiating along the solution:
\[
\begin{aligned}
        \frac{d}{dt}H_e(q(t),p(t))
        &=
        V_e'(q(t))\dot q(t)+p(t)\dot p(t)      \\
        &=
        V_e'(q(t))p(t)-p(t)V_e'(q(t))=0 .
\end{aligned}
\]
Finally, \(X_e\) is the Hamiltonian vector field of
\[
        H_e(q,p)=\frac12p^2+V_e(q)
\]
with respect to the canonical form \(dq_e\wedge dp\).  Hence the edge flow preserves \(dq_e\wedge dp\), and therefore the associated
Liouville measure \(dq_e\,dp\), by Liouville's theorem;
see for example Sec.~16 of Ref.~\onlinecite{Arnold1989} or equivalently, by preservation of the canonical
symplectic form under Hamiltonian flows,
Sec.~18.1 of Ref.~\onlinecite{CannasDaSilva2008}.
\end{proof}

\begin{lemma}[No-Zeno estimate]\label{lem:no-zeno}
Let \(K\subset\R\) be compact and suppose
\[
        K\cap\{V(v):v\in\Vcal\}=\varnothing.
\]
Then there exists \(\tau_K>0\) such that no scattered trajectory with energy in
\(K\) has two distinct vertex impacts separated by less than \(\tau_K\).
\end{lemma}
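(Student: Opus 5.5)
The plan is to reduce to a uniform lower bound on the time a trajectory segment needs between two consecutive vertex impacts. Between two consecutive impacts the trajectory is an edgewise solution on a single edge \(e\) (Lemma~\ref{lem:edge-flow}). It starts at an outgoing boundary covector at one end of \(e\) and ends at an incoming covector at an end of \(e\), with energy \(E\in K\) preserved. There are two cases: the segment goes from one end to the other, or it returns to the same end. I would bound the duration from below in each case, uniformly over \(E\in K\) and over the finitely many edges and ends.

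\emph{Step 1: uniform boundary speed.} Set \(\delta=\min\{|E-V(v)|:E\in K,\ v\in\Vcal\}\). This is positive, since \(K\) is compact and disjoint from the finite set \(\{V(v)\}\). At any impact with \(E\in K\), energy is above the vertex potential (the momentum is nonzero), so \(E-V(v)\ge\delta\) and \(|p|\ge\sqrt{2\delta}\) there. Also \(|p|\le P:=\sqrt{2(\max K-\min_\Gamma V)}\) everywhere along the trajectory, since \(V\) is continuous on the compact graph. Set \(M=\max_e\sup_{[0,\ell_e]}|V_e'|<\infty\), using edgewise \(C^2\).

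\emph{Step 2: a short-time rigidity estimate.} On an edge starting at an end \(a\) with outgoing momentum \(p_0\), \(|p_0|\ge\sqrt{2\delta}\), we have \(|p(t)-p_0|\le Mt\). So for \(t\le t_0:=\sqrt{2\delta}/(2M)\) (take \(t_0=\infty\) if \(M=0\)), \(p\) keeps the sign of \(p_0\) and \(|p(t)|\ge\sqrt{2\delta}/2\). Hence \(q\) moves strictly monotonically away from \(a\) with \(|q(t)-a|\ge t\sqrt{2\delta}/2\). In particular, the trajectory cannot return to the same end before time \(t_0\). The same-end case therefore has duration at least \(t_0\).

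\emph{Step 3: the traversal case.} Reaching the opposite end requires covering distance \(\ell_e\) at speed at most \(P\), so the duration is at least \(\ell_{\min}/P\), where \(\ell_{\min}=\min_e\ell_e\). If \(P=0\), no motion is possible at all and the statement is vacuous. Then \(\tau_K=\min(t_0,\ell_{\min}/P)>0\) works. Each vertex impact corresponds to one quotient point, with incoming and outgoing representatives identified, so consecutive distinct impacts are exactly separated by such edge segments. Non-consecutive impacts are separated by even more time.

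The main point requiring care is Step 2, the same-end return. Near an end, the edge flow could in principle turn around quickly. The regularity exclusion \(E\ne V(v)\), made uniform by the compactness of \(K\), is exactly what gives the boundary speed \(\sqrt{2\delta}\). Combined with the bounded force \(M\), it prevents an immediate turnaround.
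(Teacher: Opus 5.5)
Your proof is correct, but it controls the turnaround near a vertex differently from the paper.

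The paper's argument is spatial and uses only continuity of $V$. It chooses disjoint endpoint collars of length $r$ on which $|V(q)-V(v)|\le\delta_K/4$. Inside such a collar $E-V(q)\ge 3\delta_K/4$, so $p$ cannot change sign there. After any impact the trajectory must therefore cross a whole collar before it can reach any vertex again. This gives $\tau_K=r/P_K$ directly, with no distinction between returning to the same end and traversing the edge.

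Your argument is temporal. The force bound $M=\max_e\sup_{[0,\ell_e]}|V_e'|$ keeps $p$ from changing sign for time $t_0=\sqrt{2\delta}/(2M)$ after departure. You then treat traversals separately through $\ell_{\min}/P$.

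What each buys:
\begin{itemize}
\item Your version gives a fully explicit constant in terms of $\delta$, $M$, $\ell_{\min}$ and $P$, without appealing to uniform continuity of $V$ near the vertices.
\item It does need the $C^1$ bound on $V_e$. That bound is available under the standing edgewise $C^2$ hypothesis, but the paper's collar argument uses only continuity of $V$ and energy conservation. It is therefore less sensitive to the regularity weakening the paper anticipates after the edgewise-regularity definition.
\item The collar argument also avoids the case split.
\end{itemize}

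One small cleanup: define $P$ with a positive part, as the paper does with $P_K=1+\sqrt{2(\max K-\min_\Gamma V)_+}$. Otherwise your square root is undefined when $\max K<\min_\Gamma V$; that case is vacuous, like your $P=0$ case.
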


\begin{proof}
Set
\[
        \delta_K=\operatorname{dist}(K,\{V(v):v\in\Vcal\})>0.
\]
Since the graph is finite and \(V\) is continuous, choose \(r>0\) such that the
endpoint collars of length \(r\) are disjoint on every edge and
\[
        |V(q)-V(v)|\le \delta_K/4
\]
whenever \(q\) lies in the \(r\)-collar of an edge-end incident to \(v\).  Also
set
\[
        P_K=1+\sqrt{2(\max K-\min_\Gamma V)_+}.
\]
Every energy-\(E\in K\) trajectory satisfies \(|p|\le P_K\).

If such a trajectory lies in the \(r\)-collar of \(v\), then
\[
        |E-V(q)|\ge |E-V(v)|-|V(q)-V(v)|\ge 3\delta_K/4.
\]
Thus \(p^2=2(E-V(q))\) is nonzero there, and the sign of \(p\) cannot change
while the trajectory remains in that collar.  After an impact the trajectory
therefore must leave the collar before it can hit any vertex again.  It travels
edge-distance at least \(r\), while its speed is bounded by \(P_K\).  Hence the
next impact, if any, occurs after time at least \(r/P_K\).  Take
\(\tau_K=r/P_K\).
\end{proof}

\begin{lemma}[Local finiteness of impacts]\label{lem:impact-local-finiteness}
Let \(\gamma:\R\to\Pcal_\Scal^{\Reg}\) be an energy-preserving scattered
trajectory and let \(I_\gamma\) be its set of vertex-impact times.  If
\(E=H(\gamma(t))\), and \(\tau_E\) is the constant from
Lemma~\ref{lem:no-zeno} applied to \(K=\{E\}\), then distinct elements of
\(I_\gamma\) are separated by at least \(\tau_E\).  Consequently
\[
        \#(I_\gamma\cap[a,b])\le 1+\frac{b-a}{\tau_E}
\]
for every compact interval \([a,b]\).
\end{lemma}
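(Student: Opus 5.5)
The plan is to apply Lemma~\ref{lem:no-zeno} with the singleton \(K=\{E\}\) and then do an elementary counting argument. First, since \(\gamma\) takes values in \(\Pcal_\Scal^{\Reg}\) and is energy-preserving, \(H(\gamma(t))=E\) for all \(t\), and \(E\ne V(v)\) for every \(v\in\Vcal\). Hence \(K=\{E\}\) is compact and disjoint from the finite set \(\{V(v):v\in\Vcal\}\), so the hypotheses of Lemma~\ref{lem:no-zeno} hold. That lemma gives \(\tau_E:=\tau_K>0\) such that two distinct vertex impacts of a trajectory with energy \(E\) are separated by at least \(\tau_E\). This proves the first assertion directly.

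I should check that the proof of Lemma~\ref{lem:no-zeno} really applies to the given \(\gamma\). That argument tracks the trajectory between consecutive impacts using only three facts: the edge dynamics (Lemma~\ref{lem:edge-flow}), energy conservation, and the collar estimate. Since \(\gamma\) is a scattered trajectory, between impacts it follows the edge flow on some edge, and at an impact it leaves with nonzero momentum through an outgoing edge-end. Two kinds of impact need no separate treatment: an impact followed by reflection into the same edge, and an impact at the other end of the same edge. In both cases the outgoing leg must first exit the \(r\)-collar, because \(p\) cannot change sign inside the collar. Hence every impact is followed by an impact-free time interval of length at least \(r/P_K\).

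For the counting bound, list the elements of \(I_\gamma\cap[a,b]\). Any two distinct elements are at least \(\tau_E\) apart, so the set has no accumulation points and is finite. More precisely, suppose \(t_1<\dots<t_N\) are elements of the set. Then
\[
        (N-1)\tau_E\le t_N-t_1\le b-a.
\]
This gives \(N\le 1+(b-a)/\tau_E\) for every finite subfamily, and therefore \(I_\gamma\cap[a,b]\) itself is finite with the stated cardinality bound.

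The only step that needs care is making sure the no-Zeno lemma is invoked for the right class of trajectories, namely that \(\gamma\) is a genuine scattered trajectory at a regular energy. Once that is confirmed, the rest is immediate.
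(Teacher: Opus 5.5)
Your proof is correct and matches the paper's argument: regularity together with energy conservation makes \(K=\{E\}\) a compact set disjoint from \(\{V(v):v\in\Vcal\}\), so the no-Zeno lemma applies, and summing the gaps \(t_{j+1}-t_j\ge\tau_E\) gives \((N-1)\tau_E\le b-a\). Your two additions are harmless and left implicit in the paper: the check that the collar argument covers reflections and same-edge returns, and the remark that a uniform bound on every finite subfamily forces \(I_\gamma\cap[a,b]\) to be finite.
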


\begin{proof}
Regularity gives \(E\notin\{V(v):v\in\Vcal\}\), so the no-Zeno lemma applies to
\(K=\{E\}\).  If \(t_1<\cdots<t_N\) are impacts in \([a,b]\), then
\(t_{j+1}-t_j\ge\tau_E\).  Summing these inequalities gives
\((N-1)\tau_E\le b-a\).
\end{proof}

\begin{definition}[Scattered trajectory]\label{def:scattered-trajectory}
A scattered trajectory through \(z\in\Pcal_\Scal^{\Reg}\) is a map
\(\gamma:\R\to\Pcal_\Scal^{\Reg}\) with \(\gamma(0)=z\) such that there is a
locally finite set \(I\subset\R\) with the following properties.  On every
component of \(\R\setminus I\), \(\gamma\) lifts to a single edge cylinder and
solves the edge Hamilton equations.  At each \(t_0\in I\), the left and right
one-sided lifts are an incoming covector \(\alpha\in\din_v\) and the outgoing
covector \(S_v\alpha\in\dout_v\), respectively, and
\[
        \gamma(t_0)=\pi(\alpha)=\pi(S_v\alpha).
\]
For negative-time construction the same condition is read with \(S_v^{-1}\).
\end{definition}

\begin{proposition}[Existence and uniqueness]\label{prop:trajectory-existence}
For every \(z\in\Pcal_\Scal^{\Reg}\) there is a unique scattered trajectory
\(\gamma_z:\R\to\Pcal_\Scal^{\Reg}\) through \(z\).
\end{proposition}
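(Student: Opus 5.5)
We construct $\gamma_z$ forward in time by induction over vertex impacts, do the same backward with $S_v^{-1}$, and glue the two at $t=0$. Uniqueness is checked on each inter-impact interval.

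\emph{Forward construction.} We start from the outgoing representative $x_0=j_+(z)\in\widetilde\Pcal^+$ given by Lemma~\ref{lem:quotient-charts}. Set $E=H(z)$, which is regular, and let $t_0=0$. Suppose $x_k\in\widetilde\Pcal^+$ at time $t_k$ has been obtained. Two cases arise.

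\begin{itemize}
\item If $x_k$ lies in an open edge cylinder, we run the edge flow of Lemma~\ref{lem:edge-flow}.
\item If $x_k\in\dout_v$, then $\sigma(\eta)p<0$, so the extended edge flow immediately moves the point into the interior, and we run the same flow.
\end{itemize}

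The solution preserves $H_e=E$. On the energy shell $|p|\le P_{\{E\}}$, so $p$ stays bounded. If the solution never reaches $\{0,\ell_e\}$ in positive time, it is global on the edge and we stop; here the maximal solution of the extended field exists for all time inside the compact edge because of the bound on $p$.

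Otherwise, let $t_{k+1}>t_k$ be the first hitting time. The limit point at $q=a$ has $\tfrac12p^2=E-V(v)\neq0$, so $p\neq0$. Since the trajectory arrives from the interior, $\sigma(\eta)p\ge0$, and hence $\sigma(\eta)p>0$. The limit is therefore an incoming covector $\alpha_{k+1}\in\din_v$. Set $x_{k+1}=S_v\alpha_{k+1}\in\dout_v$. Energy preservation of $S_v$ keeps the energy equal to $E$, and the trajectory stays in $\Pcal_\Scal^{\Reg}$.

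Define $\gamma_z(t)=\pi(\text{edge solution at }t)$ on $[t_k,t_{k+1})$. At $t_{k+1}$ the two one-sided lifts are $\alpha_{k+1}$ and $S_v\alpha_{k+1}$, which have the same image under $\pi$, as Definition~\ref{def:scattered-trajectory} requires.

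\emph{Backward construction.} This is symmetric. We start from $j_-(z)\in\widetilde\Pcal^-$, follow the edge flow backward until the first boundary hit, where we land on an outgoing covector $\beta$, and continue from $S_v^{-1}\beta\in\din_v$.

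\emph{No accumulation of impacts (main obstacle).} We must show the impact times $t_k$ do not accumulate at a finite time, since the induction alone only produces a maximal interval $[0,\sup_k t_k)$. This is exactly the role of Lemma~\ref{lem:no-zeno} with $K=\{E\}$. The constructed piecewise path is a scattered trajectory on its domain, so consecutive impacts satisfy $t_{k+1}-t_k\ge\tau_E$. Hence $t_k\ge k\tau_E\to\infty$, and the forward construction covers $[0,\infty)$.

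There is one point to verify carefully: the collar argument in the proof of Lemma~\ref{lem:no-zeno} applies to each segment individually. After the departure from $x_k\in\dout_v$, the sign of $p$ is frozen inside the $r$-collar. So the next impact needs at least edge-distance $r$ of travel, which takes time at least $r/P_E$. The same bound gives $t_{-k}\to-\infty$. Local finiteness of the impact set then follows from Lemma~\ref{lem:impact-local-finiteness}.

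\emph{Uniqueness.} Let $\gamma$ be any scattered trajectory through $z$ with impact set $I$.

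\begin{itemize}
\item \emph{Right lift at $0$.} If $0\in I$, the right lift at $0$ must be an outgoing covector in the class of $z$. If $0\notin I$, it is an interior point of that class. In either case it equals $j_+(z)$, by the bijectivity in Lemma~\ref{lem:quotient-charts}.
\item \emph{Up to the first positive impact.} On $(0,\min I\cap(0,\infty))$, $\gamma$ lifts to a single edge and solves the edge equations. By Lemma~\ref{lem:edge-flow} it agrees with our solution, and so its first positive impact is the first boundary hit $t_1$.
\item \emph{At the impact.} The left lift there is $\alpha_1$ by continuity of the edge solution. The right lift is forced to be $S_v\alpha_1$.
\item \emph{Induction.} Induction over the locally finite set $I\cap[0,\infty)$ gives agreement on $[0,\infty)$. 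The same argument with $j_-$ and $S_v^{-1}$ gives agreement on $(-\infty,0]$.
\end{itemize}

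Hence the scattered trajectory through $z$ exists and is unique.
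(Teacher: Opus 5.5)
Your proposal is correct and follows essentially the same route as the paper: forward concatenation from $j_+(z)$ using the edge flow and $S_v$ at each transverse impact, backward concatenation from $j_-(z)$ using $S_v^{-1}$, the no-Zeno collar estimate with $K=\{H(z)\}$ to rule out accumulation of impacts, and uniqueness by induction over the locally finite impact set. Your extra care fills in details the paper leaves implicit: you re-run the collar argument on the partially built path to avoid applying the lemma circularly, and you use the momentum bound to get global existence on an edge. The only slip is that $t_k\ge k\tau_E$ should read $t_k\ge t_1+(k-1)\tau_E$ when $0$ is not an impact time, which still gives $t_k\to\infty$.
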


\begin{proof}
Use the outgoing representative
\(
        j_+(z)=\pi_+^{-1}(z),
\)
which is well-defined by Lemma~\ref{lem:quotient-charts}, to construct the
trajectory for \(t\ge0\).
If \(j_+(z)\) is an interior point, solve the edge ODE given by
Lemma~\ref{lem:edge-flow} until the first boundary hit.  If \(j_+(z)\) is an
outgoing boundary covector, the sign convention makes it point into the
corresponding edge for positive time, so the same construction applies.
At a boundary hit the regularity condition
ensures nonzero momentum; the covector is incoming and is replaced by its image
under \(S_v\).  Energy is preserved both along edges and at the scattering step.

Lemma~\ref{lem:no-zeno}, applied to \(K=\{H(z)\}\), prevents accumulation of impacts
in finite positive time, so the concatenation defines the forward trajectory for
all \(t\ge0\).  For \(t\le0\), use the incoming representative
\(j_-(z)=\pi_-^{-1}(z)\) and the inverse laws \(S_v^{-1}\).  If \(z\) is a
boundary class, the two one-sided representatives are \(\alpha\) and
\(S_v\alpha\), which have the same quotient value; hence the two one-sided
constructions agree at \(t=0\) in \(\Pcal_\Scal\).

Uniqueness follows by induction over impact intervals.  Between impacts it is
ODE uniqueness.  At an impact, the deterministic law fixes the outgoing
covector.  Local finiteness of impacts, Lemma~\ref{lem:impact-local-finiteness}, ensures
that the induction covers every compact time interval.
\end{proof}

Define
\[
        \Phi_tz:=\gamma_z(t),
        \qquad t\in\R,
        \quad z\in\Pcal_\Scal^{\Reg}.
\]
Uniqueness immediately gives
\begin{equation}\label{eq:flow-algebra}
        \Phi_{t+s}=\Phi_t\circ\Phi_s,
        \qquad
        \Phi_0=\id,
        \qquad
        \Phi_t^{-1}=\Phi_{-t}.
\end{equation}
Moreover,
\begin{equation}\label{eq:energy-conservation}
        H\circ\Phi_t=H,
        \qquad t\in\R,
\end{equation}
because the edge Hamiltonian is conserved along edge pieces and the scattering
laws preserve boundary energy.

\section{Measurability and Liouville invariance}\label{sec:measure}

The elementary algebra above does not yet prove that \(\Phi_t\) is measurable
or measure preserving.  We now give the technical part of the construction.

\begin{lemma}[Local flux form]\label{lem:local-flux-form}
Let \(\eta=(e,a)\) be an edge-end and let \(p_0\ne0\).  In a sufficiently small
flow box around \((a,p_0)\), the map
\[
        \Psi(s,\xi)=\phi_s^{(e)}(a,\xi)
\]
is a \(C^1\) diffeomorphism onto its image and satisfies
\[
        \Psi^*(dq_e\wedge dp)=\xi\,ds\wedge d\xi .
\]
Consequently the positive Liouville measure pulls back to
\[
        |\xi|\,ds\,d\xi ,
\]
and the transverse boundary measure is \(|\xi|\,d\xi\), equivalently \(r\,dr\)
in speed coordinates.
\end{lemma}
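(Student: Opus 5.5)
The plan is to combine the inverse function theorem with the fact that the edge flow \(\phi_s^{(e)}\) preserves \(dq_e\wedge dp\). First I will check that \(\Psi\) is \(C^1\). The extended vector field \(X_e(q,p)=(p,-\widetilde V_e'(q))\) is \(C^1\), so the local flow \((s,q,p)\mapsto\phi_s^{(e)}(q,p)\) is \(C^1\) in all variables by differentiable dependence on initial data. Restricting to the line \(\{a\}\times\R\) shows that \(\Psi(s,\xi)=\phi_s^{(e)}(a,\xi)\) is \(C^1\) on a neighborhood of \((0,p_0)\). Because we use the extension \(\widetilde V_e\) on the open interval \(I_e\), the flow is defined for small negative and positive \(s\) even though \(a\) is an endpoint of \([0,\ell_e]\).

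Next I will compute the Jacobian at \(s=0\). Since \(\phi_0=\id\), we have
\[
\partial_\xi\Psi(0,\xi)=(0,1),\qquad
\partial_s\Psi(0,\xi)=X_e(a,\xi)=\bigl(\xi,-\widetilde V_e'(a)\bigr).
\]
The determinant is therefore \(\xi\cdot1-0\cdot(-\widetilde V_e'(a))=\xi\), which is nonzero near \(p_0\ne0\). By the inverse function theorem, \(\Psi\) restricted to a small box \((-\epsilon,\epsilon)\times(p_0-\delta,p_0+\delta)\) is a \(C^1\) diffeomorphism onto an open neighborhood of \((a,p_0)\) in the extended cylinder. The box can be shrunk so that \(\xi\) keeps the sign of \(p_0\).

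The main point is that the Jacobian identity \(\Psi^*(dq_e\wedge dp)=\xi\,ds\wedge d\xi\) must hold for all \(s\), not only at \(s=0\). For this I will use the factorization \(\Psi(s,\xi)=\phi_s^{(e)}(\Psi(0,\xi))\) together with \(\partial_s\Psi(s,\xi)=X_e(\Psi(s,\xi))=D\phi_s^{(e)}\,X_e(a,\xi)\), which is the flow invariance of its own generator. This gives
\[
D\Psi(s,\xi)=D\phi_s^{(e)}(a,\xi)\,D\Psi(0,\xi).
\]
By Lemma~\ref{lem:edge-flow}, Liouville's theorem gives \(\det D\phi_s^{(e)}=1\). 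Hence \(\det D\Psi(s,\xi)=\xi\) everywhere in the box, which is exactly \(\Psi^*(dq_e\wedge dp)=\xi\,ds\wedge d\xi\). The one technical caveat is that Lemma~\ref{lem:edge-flow} is stated inside the edge. Its Liouville argument applies verbatim to the extended \(C^1\) Hamiltonian field, because the divergence \(\partial_q p+\partial_p(-\widetilde V_e'(q))\) vanishes identically.

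Finally, the change-of-variables formula for the \(C^1\) diffeomorphism \(\Psi\) converts Lebesgue measure \(dq_e\,dp\) into \(|\det D\Psi|\,ds\,d\xi=|\xi|\,ds\,d\xi\). The measure disintegrates as Lebesgue measure \(ds\) in flow time times the transverse factor \(|\xi|\,d\xi\) on the boundary line \(\{a\}\times\R\). Writing \(\xi=\pm r\) with \(r=|\xi|>0\), this transverse factor becomes \(r\,dr\), matching the flux measures of Definition~\ref{def:boundary-flux}.
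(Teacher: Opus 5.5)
Your proposal is correct and follows essentially the paper's own route. You apply the inverse function theorem at \(s=0\) using the determinant \(\xi\), then transport the identity to all \(s\) through the factorization \(D\Psi(s,\xi)=D\phi_s^{(e)}(a,\xi)\,D\Psi(0,\xi)\). Where the paper invokes \((\phi_s^{(e)})^*(dq_e\wedge dp)=dq_e\wedge dp\), you use \(\det D\phi_s^{(e)}=1\), and in dimension two these are the same fact.
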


\begin{proof}
Work on a fixed \(C^2\) extension of \(V_e\) to an open interval containing
\([0,\ell_e]\).  The edge Hamiltonian vector field is
\[
        X_H(q,p)=(p,-V_e'(q)).
\]
Since \(p_0\ne0\), \(X_H(a,p_0)\) is transverse to the boundary section
\[
        \Sigma_\eta=\{(a,\xi):\xi \text{ near }p_0\}.
\]
Equivalently,
\[
        \Psi(s,\xi)=\phi_s^{(e)}(a,\xi)
\]
has, at \(s=0\),
\[
        \partial_s\Psi(0,\xi)=X_H(a,\xi)=(\xi,-V_e'(a)),
        \qquad
        \partial_\xi\Psi(0,\xi)=(0,1).
\]
Thus
\[
        \det
        \begin{pmatrix}
        \xi & 0\\
        -V_e'(a) & 1
        \end{pmatrix}
        =\xi .
\]
After shrinking the neighborhood of \(p_0\), this determinant is nonzero.
Hence, by the inverse function theorem, \(\Psi\) is a \(C^1\) diffeomorphism
from a small neighborhood of \((0,p_0)\) onto its image.

It remains to compute the pullback of the canonical two-form
\[
        \omega=dq_e\wedge dp .
\]
Hamiltonian edge flows preserve \(\omega\); see, for example, Sec.~18.1 of Ref.~\onlinecite{CannasDaSilva2008}. Therefore, for all \((s,\xi)\) in
the flow box,
\[
\begin{aligned}
        \omega_{\Psi(s,\xi)}
        \bigl(\partial_s\Psi(s,\xi),\partial_\xi\Psi(s,\xi)\bigr)
        &=
        \omega_{(a,\xi)}
        \bigl(X_H(a,\xi),(0,1)\bigr)  \\
        &=
        (dq_e\wedge dp)
        \bigl((\xi,-V_e'(a)),(0,1)\bigr)  \\
        &=\xi .
\end{aligned}
\]
Thus
\[
        \Psi^*(dq_e\wedge dp)=\xi\,ds\wedge d\xi .
\]

Taking absolute values gives the corresponding positive Liouville density
\[
        |\xi|\,ds\,d\xi .
\]
Thus the transverse measure induced on the boundary section is
\[
        |\xi|\,d\xi .
\]
On an incoming or outgoing branch the sign of \(\xi\) is fixed, and writing
\(r=|\xi|>0\) gives \(r\,dr\) in speed coordinates, up to orientation.  Since
the boundary flux measure is positive, the orientation sign is irrelevant.
\end{proof}

For a compact regular energy window \(K\), write
\[
        E_K:=H^{-1}(K)\cap\Pcal_\Scal^{\Reg}.
\]
For \(v\in\Vcal\), define the incoming and outgoing boundary energy windows
\[
        \din_{v,K}
        :=
        \{\alpha\in\din_v:H_v^\partial(\alpha)\in K\},
        \qquad
        \dout_{v,K}
        :=
        \{\beta\in\dout_v:H_v^\partial(\beta)\in K\}.
\]
For \(0<\Delta<\tau_K\), we use the following incoming and outgoing impact
tubes:
\[
        \Theta_v^{\mathrm{in}}(u,\alpha)
        =
        \pi\bigl(\phi_{-u}^{(e)}(\alpha)\bigr),
        \qquad
        \Theta_v^{\mathrm{out}}(w,\beta)
        =
        \pi\bigl(\phi_w^{(e')}(\beta)\bigr),
\]
where \(0<u,w<\Delta\), \(\alpha\in\din_{v,K}\),
\(\beta\in\dout_{v,K}\), and \(e,e'\) are the edge components determined by
the corresponding boundary covectors.  Thus
\(\Theta_v^{\mathrm{in}}(u,\alpha)\) is the point whose trajectory reaches
\(v\) with incoming covector \(\alpha\) after time \(u\), while
\(\Theta_v^{\mathrm{out}}(w,\beta)\) is the quotient point reached after time \(w\)
starting from \(v\) with outgoing covector \(\beta\).

\begin{lemma}[Short-time Borel structure]\label{lem:short-time-borel}
Let \(K\subset\mathbb R\) be compact and disjoint from
\(\{V(v):v\in\Vcal\}\), and let \(0<\Delta<\tau_K\).  On
\[
        E_K=H^{-1}(K)\cap\Pcal_\Scal^{\Reg},
\]
the no-impact set and the at-most-one-impact set for the time interval
\([0,\Delta]\) are measurable.  On each of these sets the short-time map
\(\Phi_\Delta\) is measurable.
\end{lemma}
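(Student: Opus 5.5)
We will work throughout in the outgoing chart of Lemma~\ref{lem:quotient-charts}: a set \(A\subset E_K\) is measurable if and only if \(j_+(A)\) is Borel in \(\widetilde\Pcal^+\), and a map into \(\Pcal_\Scal\) is measurable if its composition with \(\pi\) is Borel. For each edge \(e\), the extended local flow \((t,x)\mapsto\phi_t^{(e)}(x)\) is continuous on its open domain. The first-exit time is defined by
\[
T_e(x)=\inf\{t\ge0:\phi_t^{(e)}(x)\notin(0,\ell_e)\times\R\}
\]
for \(x\) an interior point or an outgoing boundary covector, where at an outgoing covector the infimum is taken over \(t>0\). We first show that \(T_e\) is Borel on the Borel set \(j_+(E_K)\cap([0,\ell_e]\times\R)\). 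It suffices to note that \(\{T_e>t\}\) equals the set of \(x\) with \(\phi_s^{(e)}(x)\in(0,\ell_e)\times\R\) for all rational \(s\in(0,t]\). Continuity of the flow and the fact that exits on the regular sector are transverse (nonzero boundary momentum) make this countable intersection agree with the true condition; transversality excludes a trajectory touching the boundary only tangentially between rational times. Consequently the no-impact set \(N_\Delta=\pi(\{T_e>\Delta\})\) is measurable by Lemma~\ref{lem:quotient-measurability}. The convention chosen for an impact exactly at time \(\Delta\) does not matter, since \(\{T_e=\Delta\}\) is Borel too.

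On \(N_\Delta\) we have \(\Phi_\Delta\circ\pi=\pi\circ\phi_\Delta^{(e)}\). This is continuous in \(x\), hence Borel, and composition with \(\pi\) gives measurability. For the one-impact set we first observe that, because \(\Delta<\tau_K\), Lemma~\ref{lem:no-zeno} shows that a trajectory in \(E_K\) has at most one impact in \([0,\Delta]\). Hence the at-most-one-impact set is all of \(E_K\), and in particular it is measurable. On the complement \(\{T_e\le\Delta\}\), we write the exit point as \(\alpha(x)=\phi_{T_e(x)}^{(e)}(x)\). Joint continuity of the flow together with Borel measurability of \(T_e\) makes \(\alpha\) Borel, and it takes values in \(\bigcup_v\din_{v,K}\). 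Then
\[
\Phi_\Delta(\pi x)=\pi\bigl(\phi^{(e')}_{\Delta-T_e(x)}(S_v\alpha(x))\bigr).
\]
This is a composition of Borel maps: \(S_v\) is a Borel isomorphism by Definition~\ref{def:scattering-law}, and \((w,\beta)\mapsto\phi_w^{(e')}(\beta)\) is continuous, which is precisely the impact tube \(\Theta_v^{\mathrm{out}}\). Splitting by the finitely many edges and vertices yields a countable Borel partition on each piece of which the formula is Borel, so \(\Phi_\Delta\) is measurable. The lifting \(j_+\) is bimeasurable, so measurability transfers to the quotient.

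The main obstacle is the Borel measurability of the hitting time \(T_e\), because the exit set is closed rather than open. I would handle it with the transversality from Lemma~\ref{lem:local-flux-form}. Near any exit point the flow box \(\Psi\) is a \(C^1\) diffeomorphism, so \(T_e\) is locally given by the continuous coordinate \(-s\) of \(\Psi^{-1}\). This makes \(T_e\) continuous on the open set where the exit occurs within the window, and hence Borel overall. Exits with \(p=0\) cannot occur on \(E_K\) because \(K\) avoids the values \(V(v)\).
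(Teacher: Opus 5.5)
Your argument is correct, but it follows a different route from the paper.

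\textbf{The paper's route.} The paper parametrizes each incoming impact stratum by the tube map \(\Theta_\eta^{\mathrm{in}}(u,r)\) on \((0,\Delta)\times R_{v,K}\). It proves injectivity from the no-Zeno separation \(\Delta<\tau_K\). It then uses the Lusin--Souslin theorem to conclude that the image is Borel and that the inverse \(z\mapsto(u,r)\) is Borel. The slices \(u=0\) and \(u=\Delta\) are added separately, and the no-impact set is obtained as a complement.

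\textbf{Your route.} You construct that inverse explicitly. Read through \(j_+\), your map \(x\mapsto(T_e(x),\alpha(x))\) is exactly \((\Theta_v^{\mathrm{in}})^{-1}\). You prove it Borel by countable operations on the continuous edge flow. The rational-time reduction (or, equivalently, upper semicontinuity of \(T_e\) via flow boxes) is legitimate precisely because regularity makes every boundary hit transverse.

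\textbf{What each buys.} Your version is more elementary and avoids descriptive set theory. It also absorbs the endpoint cases automatically: an impact at time \(0\) is handled through the outgoing representative, and an impact at time \(\Delta\) through \(T_e=\Delta\). The paper's formulation presents the strata directly as Borel-isomorphic images of product sets. That is the form used in Lemma~\ref{lem:one-impact-measure} to pull \(m\) back to \(du\,d\mu_v^{\mathrm{in}}\), but your explicit inverse furnishes the same bijection.

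\textbf{Two harmless slips.} First, for irrational \(t\) the condition over rational \(s\in(0,t]\) describes \(\{T_e\ge t\}\) rather than \(\{T_e>t\}\); this does not matter, since rational \(t\) suffice for Borel measurability. Second, your \(N_\Delta\) includes the boundary classes, which the paper assigns to the one-impact set; these form the measurable set \(\pi\bigl(\coprod_v\dout_v\bigr)\), so either convention works.
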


\begin{proof}
Work in the outgoing quotient chart
\[
        j_+=\pi_+^{-1}:\Pcal_\Scal\to\widetilde\Pcal^+ .
\]
By Lemma~\ref{lem:quotient-charts}, this chart identifies
\(\Pcal_\Scal\) bimeasurably with \(\widetilde\Pcal^+\).  Hence
measurability may be checked in the concrete representative space
\(\widetilde\Pcal^+\), which is a finite disjoint union of open edge cylinders
together with outgoing boundary covectors.

Fix a vertex \(v\) and an incoming edge-end
\(\eta\in\Ecal_v^\partial\).  Set
\[
        R_{v,K}
        =
        \{r>0:\tfrac12 r^2+V(v)\in K\}.
\]
Since \(K\) is compact and disjoint from \(\{V(w):w\in\Vcal\}\),
\(R_{v,K}\) is compact in \((0,\infty)\).  Consider the incoming tube over this
boundary channel,
\[
        \Theta_{\eta}^{\mathrm{in}}(u,r)
        :=
        \Theta_v^{\mathrm{in}}
        \bigl(u,\alpha_{\eta,r}^{\mathrm{in}}\bigr),
        \qquad
        0<u<\Delta,\quad r\in R_{v,K}.
\]
Because \(\Delta<\tau_K\), the corresponding backward edge segment cannot hit
another vertex before time \(\Delta\); otherwise a trajectory with energy in
\(K\) would have two vertex impacts separated by less than \(\tau_K\).  Hence
\(\Theta_{\eta}^{\mathrm{in}}\) is well-defined on
\((0,\Delta)\times R_{v,K}\).

The map \(\Theta_{\eta}^{\mathrm{in}}\) is Borel, and before passing to the
quotient it is continuous.  It is also injective.  Indeed, if
\[
        \Theta_{\eta}^{\mathrm{in}}(u,r)
        =
        \Theta_{\eta}^{\mathrm{in}}(u',r'),
\]
then edge-flow uniqueness implies that the same trajectory reaches the same
boundary channel at times \(u\) and \(u'\).  If \(u\ne u'\), this would give two
vertex impacts less than \(\Delta<\tau_K\) apart.  Hence \(u=u'\), and then
uniqueness gives \(r=r'\).

Through the quotient chart \(j_+\), the range is a standard Borel space, and
the domain \((0,\Delta)\times R_{v,K}\) is also standard Borel.  By the
Lusin--Souslin theorem, the image
\[
        \mathcal I_{\eta}^{\circ}
        :=
        \Theta_{\eta}^{\mathrm{in}}
        \bigl((0,\Delta)\times R_{v,K}\bigr)
\]
is Borel, and the inverse coordinate map
\[
        (\Theta_{\eta}^{\mathrm{in}})^{-1}:
        \mathcal I_{\eta}^{\circ}
        \to
        (0,\Delta)\times R_{v,K}
\]
is Borel; see, for example, Theorem~15.1 of
Ref.~\onlinecite{Kechris2012}.

We also record the endpoint-impact slices.  Define
\[
        \mathcal I_{\eta}^{0}
        :=
        \Theta_{\eta}^{\mathrm{in}}
        \bigl(\{0\}\times R_{v,K}\bigr),
        \qquad
        \mathcal I_{\eta}^{\Delta}
        :=
        \Theta_{\eta}^{\mathrm{in}}
        \bigl(\{\Delta\}\times R_{v,K}\bigr),
\]
where the maps at \(u=0\) and \(u=\Delta\) are understood by continuous
extension of the corresponding edge-flow boxes.  These sets are Borel, since
\(R_{v,K}\) is Borel and the extended flow-box maps are Borel.

Taking the finite union over all vertices and incoming edge-ends gives the
Borel one-impact set
\[
        E_K^1
        =
        \bigcup_{v\in\Vcal}
        \bigcup_{\eta\in\Ecal_v^\partial}
        \left(
        \mathcal I_{\eta}^{\circ}
        \cup
        \mathcal I_{\eta}^{0}
        \cup
        \mathcal I_{\eta}^{\Delta}
        \right).
\]
By Lemma~\ref{lem:no-zeno}, no trajectory in \(E_K\) has more than one vertex
impact in \([0,\Delta]\).  Thus the no-impact set
\[
        E_K^0=E_K\setminus E_K^1
\]
is Borel.

On \(E_K^0\), the map \(\Phi_\Delta\) is ordinary edge Hamiltonian flow, hence
Borel.  On an open one-impact piece corresponding to a fixed vertex \(v\) and
incoming edge-end \(\eta\), write
\[
        z=\Theta_{\eta}^{\mathrm{in}}(u,r),
        \qquad
        0<u<\Delta,\quad r\in R_{v,K}.
\]
Then
\[
        \Phi_\Delta z
        =
        \Theta_v^{\mathrm{out}}
        \bigl(\Delta-u,S_v\alpha_{\eta,r}^{\mathrm{in}}\bigr).
\]
The coordinate inverse \(z\mapsto(u,r)\) is Borel on
\(\mathcal I_{\eta}^{\circ}\), the scattering map \(S_v\) is Borel, and the
outgoing tube map is Borel.  Hence \(\Phi_\Delta\) is Borel on each open
one-impact piece.

It remains only to check the endpoint slices.  On the slice
\(\mathcal I_{\eta}^{0}\), the impact occurs at the initial time.  Thus
\(\Phi_\Delta\) is given by first applying \(S_v\) to
\(\alpha_{\eta,r}^{\mathrm{in}}\) and then following the outgoing edge flow for
time \(\Delta\):
\[
        \Phi_\Delta
        \Theta_{\eta}^{\mathrm{in}}(0,r)
        =
        \Theta_v^{\mathrm{out}}
        \bigl(\Delta,S_v\alpha_{\eta,r}^{\mathrm{in}}\bigr).
\]
This is a composition of Borel maps.  On the slice
\(\mathcal I_{\eta}^{\Delta}\), the impact occurs at the terminal time.  The
map \(\Phi_\Delta\) sends the point to the quotient class of the corresponding
incoming covector, equivalently to the same class as its scattered outgoing
covector:
\[
        \Phi_\Delta
        \Theta_{\eta}^{\mathrm{in}}(\Delta,r)
        =
        \pi(\alpha_{\eta,r}^{\mathrm{in}})
        =
        \pi(S_v\alpha_{\eta,r}^{\mathrm{in}}).
\]
Again this is Borel in \(r\).  Since there are only finitely many vertices and
edge-ends, \(\Phi_\Delta\) is measurable on the one-impact set \(E_K^1\).

Combining the no-impact and one-impact parts, \(\Phi_\Delta\) is measurable on
\(E_K\).
\end{proof}

\begin{lemma}[One-impact Liouville invariance]\label{lem:one-impact-measure}
Under the hypotheses of Lemma~\ref{lem:short-time-borel}, the map
\(\Phi_\Delta\) preserves \(m\) on \(E_K\): for every measurable
\(A\subset\Pcal_\Scal^{\Reg}\),
\[
        m(\Phi_\Delta^{-1}A\cap E_K)=m(A\cap E_K).
\]
\end{lemma}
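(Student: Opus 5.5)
I would prove the identity by splitting $E_K$ into the Borel pieces already produced in Lemma~\ref{lem:short-time-borel}: the no-impact set $E_K^0$, the open one-impact tubes $\mathcal I_\eta^\circ$ (one for each vertex $v$ and incoming edge-end $\eta\in\Ecal_v^\partial$), and the endpoint slices $\mathcal I_\eta^0,\mathcal I_\eta^\Delta$. Since $H\circ\Phi_\Delta=H$, the map $\Phi_\Delta$ sends $E_K$ bijectively onto itself, with inverse $\Phi_{-\Delta}$. So it suffices to show that the measure $(\Phi_\Delta|_{E_K})_\#(m|_{E_K})$ equals $m|_{E_K}$. I would do this piece by piece and check that the images of the pieces are again an essentially disjoint cover of $E_K$. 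The endpoint slices are images of $\{0\}\times R_{v,K}$ and $\{\Delta\}\times R_{v,K}$ under flow boxes, so they are one-dimensional $C^1$ curves and hence $m$-null. Their images are contained in the analogous outgoing slices, which are null for the same reason. These slices can therefore be discarded.

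\emph{No-impact part.} On $E_K^0$ I work in the chart $j_+$, where $\Phi_\Delta$ is the ordinary edge flow $\phi^{(e)}_\Delta$ restricted to points whose orbit stays in the open edge on $[0,\Delta]$. Lemma~\ref{lem:edge-flow} gives invariance of $dq_e\,dp$ there. The image of $E_K^0$ is exactly the set of points whose backward orbit stays interior on $[0,\Delta]$, that is, the no-impact set of $\Phi_{-\Delta}$. So on this piece $\Phi_\Delta$ is a measure-preserving bijection onto that set.

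\emph{One-impact part.} Fix $v$ and $\eta$. By Lemma~\ref{lem:local-flux-form}, covering the compact set $R_{v,K}$ by finitely many flow boxes, the tube map $(u,r)\mapsto\Theta^{\mathrm{in}}_\eta(u,r)$ carries $du\,r\,dr$ to $m$ restricted to $\mathcal I_\eta^\circ$. Transversality holds because $r\ge\min R_{v,K}>0$, and the tube is injective by Lemma~\ref{lem:short-time-borel}. In the same way, $\Theta^{\mathrm{out}}_v$ carries $dw$ times the outgoing flux measure $\mu_v^{\mathrm{out}}$ to $m$ on the outgoing tubes. In these coordinates,
\[
\Phi_\Delta\Theta^{\mathrm{in}}_\eta(u,r)=\Theta^{\mathrm{out}}_v\bigl(\Delta-u,S_v\alpha^{\mathrm{in}}_{\eta,r}\bigr).
\]
This map is the product of the reflection $u\mapsto\Delta-u$, which preserves Lebesgue measure on $(0,\Delta)$, with $S_v$. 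Summing over $\eta\in\Ecal_v^\partial$, Lemma~\ref{lem:flux-preservation} gives $(S_v)_\#\mu_v^{\mathrm{in}}=\mu_v^{\mathrm{out}}$. Hence the pushforward of $m$ on $\bigcup_\eta\mathcal I_\eta^\circ$ equals $m$ on $\bigcup_{\eta'}\Theta^{\mathrm{out}}_v\bigl((0,\Delta)\times \dout_{v,K}\bigr)$. This target set is precisely the set of points of $E_K$ that had exactly one impact, at $v$, during the preceding time interval of length $\Delta$. Bijectivity of $S_v$ and injectivity of the outgoing tube, which follows from the no-Zeno bound exactly as for the incoming tube, make this a bijection up to null sets.

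\emph{Assembly.} The image pieces, namely the backward no-impact set and the outgoing one-impact tubes over all $v$, partition $E_K$ up to null sets. This follows because $\Phi_{-\Delta}$ admits the same classification by Lemma~\ref{lem:no-zeno}. Summing the pieces then gives $m(\Phi_\Delta^{-1}A\cap E_K)=m(A\cap E_K)$.

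I expect the main obstacle to be the change of variables on the tubes. It requires checking that the local flow-box identity $\Psi^*(dq\wedge dp)=\xi\,ds\wedge d\xi$ from Lemma~\ref{lem:local-flux-form} holds globally on the whole tube $(0,\Delta)\times R_{v,K}$, not only near $s=0$. I would obtain this by writing $\Theta^{\mathrm{in}}_\eta(u,r)=\phi^{(e)}_{-u}(\alpha^{\mathrm{in}}_{\eta,r})$ and using invariance of $\omega$ under $\phi^{(e)}$ for all $u$, exactly as in the proof of that lemma, together with the injectivity already shown and the area formula for injective $C^1$ maps.
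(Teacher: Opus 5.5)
Your proposal is correct and follows essentially the same route as the paper. It discards the $m$-null endpoint slices and uses edgewise Liouville invariance on the no-impact part. On the one-impact tubes it passes to coordinates with density $du\,r\,dr$, where $\Phi_\Delta$ becomes $(u,\alpha)\mapsto(\Delta-u,S_v\alpha)$ and flux preservation $(S_v)_\#\mu_v^{\mathrm{in}}=\mu_v^{\mathrm{out}}$ finishes the computation. Two of your checks supply details the paper's proof relies on but leaves implicit: that $\Psi^*(dq_e\wedge dp)=\xi\,ds\wedge d\xi$ holds on the whole tube $(0,\Delta)\times R_{v,K}$ (via $\omega$-invariance and injectivity, not just in a small flow box), and that the image pieces again partition $E_K$ up to null sets.
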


\begin{proof}
The sets of points whose trajectory has an impact exactly at time \(0\) or
\(\Delta\) are \(m\)-null.  In the tube coordinates introduced above, these
sets are contained in slices of the form
\[
        \{u=0\}\times B
        \qquad\text{or}\qquad
        \{u=\Delta\}\times B .
\]
By Lemma~\ref{lem:local-flux-form}, the Liouville measure in these coordinates
has density \(r\,du\,dr\), so such slices are null.  Since there are only
finitely many vertices and edge-ends, their union is null.  We discard this
null set.

On the no-impact part, \(\Phi_\Delta\) is ordinary edge Hamiltonian flow and
therefore preserves \(dq_e\,dp\) by Lemma~\ref{lem:edge-flow}.  It remains to
treat the one-impact part.

Fix a vertex \(v\).  Let \(\mathcal I_v^{\mathrm{in}}\) and
\(\mathcal I_v^{\mathrm{out}}\) be the incoming and outgoing one-impact strata
at \(v\), with endpoint slices removed.  By Lemma~\ref{lem:short-time-borel}
and the no-Zeno estimate, the tube coordinates give Borel isomorphisms onto
these strata.  Moreover, Lemma~\ref{lem:local-flux-form} gives
\[
        (\Theta_v^{\mathrm{in}})^*m=du\,d\mu_v^{\mathrm{in}},
        \qquad
        (\Theta_v^{\mathrm{out}})^*m=dw\,d\mu_v^{\mathrm{out}}.
\]
In these coordinates,
\[
        \Phi_\Delta\bigl(\Theta_v^{\mathrm{in}}(u,\alpha)\bigr)
        =
        \Theta_v^{\mathrm{out}}\bigl(\Delta-u,S_v\alpha\bigr).
\]

Let \(f\ge0\) be measurable on \(\mathcal I_v^{\mathrm{out}}\).  Then
\[
\begin{aligned}
        \int_{\mathcal I_v^{\mathrm{in}}}
        f(\Phi_\Delta z)\,dm(z)
        &=
        \int_{\din_{v,K}}\int_0^\Delta
        f\!\left(
        \Theta_v^{\mathrm{out}}(\Delta-u,S_v\alpha)
        \right)
        du\,d\mu_v^{\mathrm{in}}(\alpha)       \\
        &=
        \int_{\dout_{v,K}}\int_0^\Delta
        f\!\left(
        \Theta_v^{\mathrm{out}}(w,\beta)
        \right)
        dw\,d\mu_v^{\mathrm{out}}(\beta)       \\
        &=
        \int_{\mathcal I_v^{\mathrm{out}}} f(z)\,dm(z).
\end{aligned}
\]
The second equality uses \(w=\Delta-u\) and
Lemma~\ref{lem:flux-preservation}.  Applying this identity to indicator
functions and summing over the finitely many vertices gives measure
preservation on the one-impact part.  Combining this with the no-impact part
and the discarded null endpoint slices yields
\[
        m(\Phi_\Delta^{-1}A\cap E_K)=m(A\cap E_K)
\]
for every measurable \(A\subset\Pcal_\Scal^{\Reg}\).
\end{proof}

\begin{lemma}[Measurability of the scattered evolution]\label{lem:measurability}
For every \(t\in\R\), the map
\[
        \Phi_t:\Pcal_\Scal^{\Reg}\to\Pcal_\Scal^{\Reg}
\]
is measurable.  Hence, by \eqref{eq:flow-algebra}, it is bimeasurable.
\end{lemma}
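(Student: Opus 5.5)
The plan is to reduce measurability of \(\Phi_t\) on all of \(\Pcal_\Scal^{\Reg}\) to the short-time statement of Lemma~\ref{lem:short-time-borel}. We do this by first decomposing the regular phase space into countably many compact regular energy windows, and then decomposing time into finitely many short steps.

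The first step is to cover the regular energy set. The set \(\R\setminus\{V(v):v\in\Vcal\}\) is open, so it is a countable union of compact intervals \(K_n\), each disjoint from the finite set \(\{V(v):v\in\Vcal\}\). Correspondingly,
\[
        \Pcal_\Scal^{\Reg}=\bigcup_{n} E_{K_n},
        \qquad
        E_{K_n}=H^{-1}(K_n)\cap\Pcal_\Scal^{\Reg}.
\]
Each \(E_{K_n}\) is measurable because \(H\) is measurable. By energy conservation \eqref{eq:energy-conservation}, each \(E_{K_n}\) is invariant under every \(\Phi_s\). It therefore suffices to show that the restriction \(\Phi_t|_{E_{K_n}}:E_{K_n}\to E_{K_n}\) is measurable for each \(n\). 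Indeed, for measurable \(A\) we then have
\[
        \Phi_t^{-1}(A)=\bigcup_n\bigl(\Phi_t|_{E_{K_n}}\bigr)^{-1}(A\cap E_{K_n}),
\]
which is a countable union of measurable sets.

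The second step handles a fixed window \(K=K_n\) and fixed \(t\ge0\). Choose \(N\in\N\) with \(\Delta:=t/N<\tau_K\); if \(t=0\) there is nothing to prove. By the flow law \eqref{eq:flow-algebra}, \(\Phi_t=\Phi_\Delta^{N}\), and each factor maps \(E_K\) into \(E_K\). Lemma~\ref{lem:short-time-borel} shows that \(\Phi_\Delta\) is measurable on each of the measurable pieces \(E_K^0\) and \(E_K^1\), and these pieces cover \(E_K\). Hence \(\Phi_\Delta|_{E_K}\) is measurable, since for measurable \(A\),
\[
        \Phi_\Delta^{-1}(A)\cap E_K
        =\bigl(\Phi_\Delta^{-1}(A)\cap E_K^0\bigr)\cup\bigl(\Phi_\Delta^{-1}(A)\cap E_K^1\bigr).
\]
A finite composition of measurable self-maps of \(E_K\) is measurable, so \(\Phi_t|_{E_K}\) is measurable.

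The third step treats negative times. For \(t<0\), I would either rerun Lemma~\ref{lem:short-time-borel} with time reversed, or observe that its proof is symmetric. The symmetric version uses the incoming chart \(j_-\), the outgoing tubes \(\Theta_v^{\mathrm{out}}\) as coordinates on the one-impact strata, and \(S_v^{-1}\), which is Borel because \(S_v\) is a Borel isomorphism. Together with Lusin--Souslin, this gives measurability of \(\Phi_{-\Delta}\) on \(E_K\) for \(0<\Delta<\tau_K\), and the composition argument then proceeds as before. Bimeasurability follows once measurability holds for every \(t\in\R\), because \(\Phi_t^{-1}=\Phi_{-t}\) by \eqref{eq:flow-algebra}.

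The main obstacle is the negative-time case. Lemma~\ref{lem:short-time-borel} is stated only for the forward map \(\Phi_\Delta\), so I must check that the mirrored tube construction really yields a Borel injective parametrization. This needs two facts: no-Zeno separation, to get injectivity, and standardness of the space through \(j_-\) instead of \(j_+\). Both hold by Lemma~\ref{lem:quotient-charts} and Lemma~\ref{lem:no-zeno}, whose statement is symmetric in time. The remaining points are mild bookkeeping: the window invariance coming from energy conservation, and the fact that the \(\sigma\)-algebra on \(E_K\) is the trace of the quotient structure.
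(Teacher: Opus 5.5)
Your proposal is correct and matches the paper's proof essentially step for step. Like the paper, you exhaust $\Pcal_\Scal^{\Reg}$ by the invariant compact-window sets $E_K$, write $\Phi_t$ on each as a finite composition of short-time maps that are measurable by Lemma~\ref{lem:short-time-borel}, handle $t<0$ by the mirrored construction with $j_-$ and $S_v^{-1}$, and get bimeasurability from $\Phi_t^{-1}=\Phi_{-t}$, with your extra bookkeeping just spelling out what the paper calls ``identical.''
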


\begin{proof}
Let \(C=\{V(v):v\in\Vcal\}\) and set
\[
        K_n=\{E\in[-n,n]:\operatorname{dist}(E,C)\ge 1/n\}.
\]
Then \(\R\setminus C=\bigcup_n K_n\), with each \(K_n\) compact and disjoint
from \(C\).  The measurable invariant sets
\[
        E_n=H^{-1}(K_n)\cap\Pcal_\Scal^{\Reg}
\]
exhaust \(\Pcal_\Scal^{\Reg}\).  For fixed \(t>0\), partition \([0,t]\) into
subintervals of length smaller than \(\tau_{K_n}\).  On \(E_n\), each short-time
factor is measurable by Lemma~\ref{lem:short-time-borel}; hence their finite
composition is measurable.  Since the \(E_n\)'s exhaust the space, \(\Phi_t\) is
measurable.  The case \(t<0\) is identical, using the incoming chart
\(j_-=\pi_-^{-1}\) from Lemma~\ref{lem:quotient-charts} and the Borel maps
\(S_v^{-1}\), which exist by Definition~\ref{def:scattering-law}.
\end{proof}

\begin{theorem}[Global deterministic scattered evolution]\label{thm:main}
Let \(\Gamma\) be a finite metric graph, let \(V\) be continuous and edgewise
\(C^2\), and let \(\Scal=(S_v)_{v\in\Vcal}\) be deterministic
energy-preserving scattering laws.  Then the maps \((\Phi_t)_{t\in\R}\) form a
one-parameter group of bimeasurable transformations of \(\Pcal_\Scal^{\Reg}\).
For all \(t\in\R\),
\[
        H\circ\Phi_t=H,
\]
and
\[
        m(\Phi_t^{-1}A)=m(A)
\]
for every measurable \(A\subset\Pcal_\Scal^{\Reg}\), where \(m\) is restricted
to \(\Pcal_\Scal^{\Reg}\).
\end{theorem}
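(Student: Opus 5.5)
The proof assembles the preceding lemmas. The group law \eqref{eq:flow-algebra} follows from the uniqueness part of Proposition~\ref{prop:trajectory-existence}. If \(\gamma_z\) is the scattered trajectory through \(z\), then \(t\mapsto\gamma_z(t+s)\) is a scattered trajectory through \(\Phi_sz\). Its impact set is the translate \(I_{\gamma_z}-s\), which is still locally finite. By uniqueness it coincides with \(\gamma_{\Phi_sz}\), so \(\Phi_{t+s}=\Phi_t\circ\Phi_s\) and \(\Phi_0=\id\). Hence every \(\Phi_t\) is a bijection of \(\Pcal_\Scal^{\Reg}\) with inverse \(\Phi_{-t}\). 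Bimeasurability then comes from Lemma~\ref{lem:measurability} applied to both \(t\) and \(-t\). Energy conservation \eqref{eq:energy-conservation} holds because edge pieces conserve \(H_e\) (Lemma~\ref{lem:edge-flow}) and each \(S_v\) preserves \(H_v^\partial\). One checks that the impact instants do not matter, since \(H\) is well defined on classes.

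The substance is measure invariance, which I would reduce to Lemma~\ref{lem:one-impact-measure}. Take the exhaustion \(K_n\) and \(E_n=H^{-1}(K_n)\cap\Pcal_\Scal^{\Reg}\) from the proof of Lemma~\ref{lem:measurability}. Each \(E_n\) is \(\Phi_t\)-invariant by energy conservation, and \(E_n\uparrow\Pcal_\Scal^{\Reg}\). Fix \(t>0\) and \(n\), choose \(N\) with \(\Delta=t/N<\tau_{K_n}\), and write \(\Phi_t=\Phi_\Delta^N\). Lemma~\ref{lem:one-impact-measure} gives \(m(\Phi_\Delta^{-1}B\cap E_n)=m(B\cap E_n)\) for every measurable \(B\). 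Invariance gives \(\Phi_\Delta^{-1}(B\cap E_n)=\Phi_\Delta^{-1}B\cap E_n\), so \(\Phi_\Delta\) restricted to \(E_n\) preserves \(m|_{E_n}\). Iterating \(N\) times yields \(m(\Phi_t^{-1}A\cap E_n)=m(A\cap E_n)\), and monotone convergence in \(n\) gives \(m(\Phi_t^{-1}A)=m(A)\).

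For \(t<0\), I would avoid redoing the lemmas with the incoming chart. Substitute \(A'=\Phi_{-t}^{-1}A=\Phi_tA\), which is measurable by bimeasurability, into the positive-time identity for \(-t>0\). This gives \(m(\Phi_{-t}^{-1}A')=m(A')\), that is, \(m(A)=m(\Phi_tA)=m(\Phi_t^{-1}A)\) after relabeling. More concretely, apply the \(|t|\) case to \(\Phi_t^{-1}A\) in place of \(A\). The case \(t=0\) is trivial.

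The main obstacle is bookkeeping rather than analysis, namely checking that Lemma~\ref{lem:one-impact-measure} really applies in this iterated form. The restriction to \(E_n\) has to be compatible with composition, and the discarded endpoint slices in that lemma are null. Both points rest on the \(\Phi_t\)-invariance of \(E_n\), which is exactly energy conservation, so I would prove energy conservation first and then use it as the key tool. Note also that \(m\) is \(\sigma\)-finite on each \(E_n\), since \(H^{-1}(K_n)\) has bounded momentum on a finite graph, though monotone convergence does not even need finiteness.
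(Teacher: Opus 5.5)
Your proposal is correct and follows the paper's proof essentially step for step. It uses the group law from uniqueness of scattered trajectories, bimeasurability from Lemma~\ref{lem:measurability}, and the invariant exhaustion \(E_n=H^{-1}(K_n)\cap\Pcal_\Scal^{\Reg}\). It then splits \(\Phi_t\) into steps shorter than \(\tau_{K_n}\), iterates Lemma~\ref{lem:one-impact-measure}, passes to the limit by monotone convergence in \(n\), and reduces \(t<0\) to \(t>0\) via \(\Phi_t^{-1}=\Phi_{-t}\).

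One small slip: in your first formulation of the \(t<0\) case the substitution should be \(A'=\Phi_t^{-1}A=\Phi_{-t}A\). As written, \(A'=\Phi_tA\) only yields \(m(\Phi_{2t}A)=m(\Phi_tA)\). Your ``more concretely'' version is the correct one.
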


\begin{proof}
The group property and energy conservation were proved in
\eqref{eq:flow-algebra} and \eqref{eq:energy-conservation}; measurability is
Lemma~\ref{lem:measurability}.  For measure preservation it suffices to consider \(t\ge0\), since
\(\Phi_t^{-1}=\Phi_{-t}\).  Let
\(C=\{V(v):v\in\Vcal\}\) and use the compact exhaustion
\[
        K_n=\{E\in[-n,n]:\operatorname{dist}(E,C)\ge 1/n\}.
\]
The invariant measurable sets
\[
        E_n=H^{-1}(K_n)\cap\Pcal_\Scal^{\Reg}
\]
increase to \(\Pcal_\Scal^{\Reg}\).  On each \(E_n\), partition the time
interval into finitely many subintervals of length less than \(\tau_{K_n}\).
Lemma~\ref{lem:one-impact-measure} gives measure preservation for each
short-time factor, and the invariance of \(E_n\) under the flow allows these
identities to be iterated.  Hence \(\Phi_t\) preserves \(m\) on every \(E_n\).

For a measurable \(A\subset\Pcal_\Scal^{\Reg}\),
\[
        A\cap E_n\uparrow A,
        \qquad
        \Phi_t^{-1}(A)\cap E_n
        =
        \Phi_t^{-1}(A\cap E_n)\cap E_n
        \uparrow \Phi_t^{-1}(A).
\]
By monotone convergence of measures,
\[
        m(\Phi_t^{-1}A)
        =
        \lim_{n\to\infty}m(\Phi_t^{-1}(A\cap E_n)\cap E_n)
        =
        \lim_{n\to\infty}m(A\cap E_n)
        =
        m(A).
\]
Thus \(\Phi_t\) preserves \(m\) on the whole regular phase space.
\end{proof}

\begin{corollary}[Reversibility of the global evolution]
\label{cor:global-reversibility}
Assume that the scattering structure \(\Scal=(S_v)_{v\in\Vcal}\) is reversible,
that is,
\[
        S_v^{-1}=\widetilde\rho\circ S_v\circ\widetilde\rho
        \qquad\text{as maps }\dout_v\to\din_v
\]
for every \(v\in\Vcal\).  Let
\[
        \rho:\Pcal_\Scal\to\Pcal_\Scal
\]
denote the quotient momentum-reversal involution from
Lemma~\ref{lem:rho-descends}.  Then \(\rho\) preserves
\(\Pcal_\Scal^{\Reg}\), and the scattered evolution satisfies
\[
        \rho\circ\Phi_t=\Phi_{-t}\circ\rho,
        \qquad t\in\mathbb R .
\]
\end{corollary}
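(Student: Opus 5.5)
The argument uses uniqueness of scattered trajectories (Proposition~\ref{prop:trajectory-existence}). Preservation of \(\Pcal_\Scal^{\Reg}\) by \(\rho\) is already part of Lemma~\ref{lem:rho-descends}, since \(H\circ\rho=H\). Fix \(z\in\Pcal_\Scal^{\Reg}\) and define
\[
        \widehat\gamma(t):=\rho\bigl(\Phi_{-t}z\bigr),\qquad t\in\R .
\]
Then \(\widehat\gamma(0)=\rho z\). The plan is to show that \(\widehat\gamma\) is a scattered trajectory through \(\rho z\) in the sense of Definition~\ref{def:scattered-trajectory}. Uniqueness then gives \(\widehat\gamma=\gamma_{\rho z}\), that is, \(\rho\Phi_{-t}z=\Phi_t\rho z\) for all \(t\). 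Replacing \(z\) by \(\rho z\) and \(t\) by \(-t\), and using \(\rho^2=\id\), yields \(\rho\circ\Phi_t=\Phi_{-t}\circ\rho\).

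First I verify the edge pieces. Let \(I\) be the impact set of \(\gamma_z\); it is locally finite by Lemma~\ref{lem:impact-local-finiteness}. Then \(-I\) is locally finite as well. On each component of \(\R\setminus(-I)\), the curve \(\widehat\gamma\) lifts to a single edge cylinder as \(t\mapsto\widetilde\rho(q(-t),p(-t))=(q(-t),-p(-t))\). Here \((q,p)\) solves \(\dot q=p\), \(\dot p=-V_e'(q)\). Differentiating gives \(\frac{d}{dt}q(-t)=-p(-t)\) and \(\frac{d}{dt}(-p(-t))=-V_e'(q(-t))\), so the lifted curve solves the edge Hamilton equations. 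This is the usual time-reversal symmetry of \(H_e=p^2/2+V_e\), which is even in \(p\).

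The main step is checking the vertex condition at each impact time \(-t_0\), \(t_0\in I\). For \(\gamma_z\), the left lift at \(t_0\) is \(\alpha\in\din_v\) and the right lift is \(\beta=S_v\alpha\in\dout_v\). Under \(t\mapsto-t\) these sides are exchanged, and \(\widetilde\rho\) is applied. So for \(\widehat\gamma\) at \(-t_0\), the left lift is \(\widetilde\rho\beta\) and the right lift is \(\widetilde\rho\alpha\). Since \(\widetilde\rho\) exchanges \(\dout_v\) and \(\din_v\), we have \(\widetilde\rho\beta\in\din_v\) and \(\widetilde\rho\alpha\in\dout_v\).

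The definition requires \(\widetilde\rho\alpha=S_v(\widetilde\rho\beta)\). By reversibility, \(S_v\widetilde\rho\beta=\widetilde\rho S_v^{-1}\beta=\widetilde\rho\alpha\), which is exactly this requirement. The quotient value is \(\widehat\gamma(-t_0)=\rho\pi(\alpha)=\pi(\widetilde\rho\alpha)=\pi(\widetilde\rho\beta)\), using \(\rho\circ\pi=\pi\circ\widetilde\rho\).

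Some care is needed with the sign bookkeeping of one-sided lifts: the limit from the left of \(\widehat\gamma\) at \(-t_0\) is \(\widetilde\rho\) applied to the limit from the right of \(\gamma_z\) at \(t_0\). One also needs that the lift of \(\gamma_z\) on an interval adjacent to \(t_0\) extends continuously to the boundary covector. This holds because the edge flow is continuous up to the boundary hit.

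Finally, \(\widehat\gamma\) takes values in \(\Pcal_\Scal^{\Reg}\), because \(\Phi_{-t}\) and \(\rho\) both preserve the regular sector. The uniqueness clause of Proposition~\ref{prop:trajectory-existence} therefore applies, and the identity follows pointwise for every \(z\) and \(t\).
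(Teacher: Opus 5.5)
Your proposal is correct and follows essentially the same argument as the paper: you define the reversed curve \(\rho\Phi_{-t}z\), verify the edge equations by the time-reversal symmetry of \(H_e\) and the vertex rule via \(S_v(\widetilde\rho S_v\alpha)=\widetilde\rho\alpha\), and conclude by uniqueness of scattered trajectories. The extra substitution \(z\mapsto\rho z\) at the end is harmless but unnecessary, since replacing \(t\) by \(-t\) alone already gives \(\rho\circ\Phi_t=\Phi_{-t}\circ\rho\).
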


\begin{proof}
By Lemma~\ref{lem:rho-descends}, componentwise momentum reversal descends to a
measurable involution on the quotient phase space, still denoted by \(\rho\),
and satisfies
\[
        H\circ\rho=H.
\]
Since regularity is defined by excluding the energy levels \(V(v)\), it follows
that \(\rho\) maps \(\Pcal_\Scal^{\Reg}\) onto itself.

Fix \(z_0\in\Pcal_\Scal^{\Reg}\), and let
\[
        z(t):=\Phi_t z_0
\]
be the unique scattered trajectory through \(z_0\), given by
Proposition~\ref{prop:trajectory-existence}.  Define
\[
        \widetilde z(t):=\rho z(-t).
\]
We show that \(\widetilde z\) is the scattered trajectory through
\(\rho z_0\).

First consider a time interval on which \(z(t)\) has a lift
\[
        (q(t),p(t))
\]
to a single edge cylinder and satisfies
\[
        \dot q(t)=p(t),
        \qquad
        \dot p(t)=-V_e'(q(t)).
\]
Then \(\widetilde z(t)\) has the lifted representative
\[
        (\widetilde q(t),\widetilde p(t))
        :=
        (q(-t),-p(-t)).
\]
Differentiating gives
\[
        \dot{\widetilde q}(t)
        =
        -\dot q(-t)
        =
        -p(-t)
        =
        \widetilde p(t),
\]
and
\[
        \dot{\widetilde p}(t)
        =
        \dot p(-t)
        =
        -V_e'(q(-t))
        =
        -V_e'(\widetilde q(t)).
\]
Thus \(\widetilde z\) satisfies the same edge Hamilton equations on every
edgewise interval.

It remains to check the vertex rule.  Suppose \(t_0\) is an impact time for
\(z\) at a vertex \(v\).  Let the incoming and outgoing one-sided boundary
covectors of \(z\) at \(t_0\) be
\[
        \alpha\in\din_v,
        \qquad
        S_v\alpha\in\dout_v.
\]
Thus, in the quotient,
\[
        z(t_0)=\pi(\alpha)=\pi(S_v\alpha).
\]
For the reversed curve \(\widetilde z(t)=\rho z(-t)\), the corresponding impact
time is \(-t_0\).  Immediately before \(-t_0\), the reversed curve has boundary
covector
\[
        \widetilde\rho(S_v\alpha)\in\din_v,
\]
and immediately after \(-t_0\), it has boundary covector
\[
        \widetilde\rho\alpha\in\dout_v.
\]
The reversibility condition gives
\[
        S_v(\widetilde\rho S_v\alpha)=\widetilde\rho\alpha .
\]
Therefore the one-sided boundary covectors of \(\widetilde z\) at \(-t_0\)
satisfy precisely the prescribed forward-time scattering rule.

Hence \(\widetilde z\) is a scattered trajectory through
\[
        \widetilde z(0)=\rho z(0)=\rho z_0.
\]
By uniqueness of scattered trajectories,
Proposition~\ref{prop:trajectory-existence}, we have
\[
        \widetilde z(t)=\Phi_t(\rho z_0)
        \qquad\text{for all }t\in\mathbb R.
\]
Using the definition of \(\widetilde z\), this gives
\[
        \rho\Phi_{-t}z_0=\Phi_t\rho z_0 .
\]
Replacing \(t\) by \(-t\), we obtain
\[
        \rho\Phi_t z_0=\Phi_{-t}\rho z_0 .
\]
Since \(z_0\in\Pcal_\Scal^{\Reg}\) was arbitrary,
\[
        \rho\circ\Phi_t=\Phi_{-t}\circ\rho,
        \qquad t\in\mathbb R .
\]
\end{proof}

\section{Regular energy surfaces}\label{sec:energy-surfaces}

For \(E\notin\{V(v):v\in\Vcal\}\), the level
\(H^{-1}(E)\cap\Pcal_\Scal^{\Reg}\) is invariant by
Theorem~\ref{thm:main}.
The following proposition records the usual time-parametrization measure on a
one-dimensional regular Hamiltonian energy curve.

\begin{proposition}[Invariant time measure on a regular energy surface]\label{prop:energy-surface-measure}
Let \(E_0\notin\{V(v):v\in\Vcal\}\), and assume that \(E_0\) is a regular value
of \(V_e:(0,\ell_e)\to\R\) for every edge \(e\).  Set
\[
        U_e(E_0)=\{q\in(0,\ell_e):E_0>V_e(q)\},
        \qquad
        \Sigma_{E_0}=H^{-1}(E_0)\cap\Pcal_\Scal^{\Reg}.
\]
Define \(\nu_{E_0}\) as the pushforward measure characterized by
\[
\begin{aligned}
        \int_{\Sigma_{E_0}} f\,d\nu_{E_0}
        :=\sum_{e\in\Ecal}\sum_{\varepsilon=\pm1}
        \int_{U_e(E_0)}
        f\!\left(\pi(e,q,\varepsilon\sqrt{2(E_0-V_e(q))})\right)
        \frac{dq}{\sqrt{2(E_0-V_e(q))}}
\end{aligned}
\]
for every nonnegative measurable \(f\).  Equivalently, \(\nu_{E_0}\) is the
pushforward of the branch measures
\[
        \frac{dq}{\sqrt{2(E_0-V_e(q))}}
\]
under the maps
\[
        q\mapsto
        \pi(e,q,\varepsilon\sqrt{2(E_0-V_e(q))}),
        \qquad q\in U_e(E_0),\quad \varepsilon=\pm1.
\]
With this convention, \(\nu_{E_0}\) is a measure on the whole quotient energy
surface \(\Sigma_{E_0}\).
The complement of the parametrized open branches
consists only of interior turning points and boundary equivalence classes, and
this complement has \(\nu_{E_0}\)-measure zero.  Then
\(\nu_{E_0}\) is invariant under
\(\Phi_t|_{\Sigma_{E_0}}\).  If \(\Sigma_{E_0}\ne\varnothing\), then
\(\nu_{E_0}(\Sigma_{E_0})<\infty\), and its normalization is an invariant
probability measure.
\end{proposition}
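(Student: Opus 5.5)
We prove invariance of \(\nu_{E_0}\) by disintegrating the Liouville measure \(m\) along energy, and we prove finiteness by a direct estimate near turning points.

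\textbf{Null complement.} The complement of the open branches consists of the boundary classes, which are finitely many points on \(\Sigma_{E_0}\) since \(|p|\) is fixed by \(E_0\), and the interior turning points, where \(V_e(q)=E_0\). Regularity of \(E_0\) for \(V_e\) makes the turning points isolated in \((0,\ell_e)\). Moreover, no turning point accumulates at a vertex, because \(E_0\ne V(v)\) and \(V\) is continuous. So the turning points are finite. Each branch measure is absolutely continuous with respect to \(dq\) on \(U_e(E_0)\), so this finite set is \(\nu_{E_0}\)-null.

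\textbf{Finiteness.} Near a turning point \(q_*\) we have \(V_e'(q_*)\ne0\), hence \(E_0-V_e(q)\ge c|q-q_*|\) on the allowed side. The density \((2(E_0-V_e))^{-1/2}\) is therefore \(O(|q-q_*|^{-1/2})\), which is integrable. Away from turning points and on the compact closures of the edges, \(E_0-V_e\) is bounded below, including near the vertices, since \(E_0\ne V(v)\). Summing over the finitely many edges and signs gives \(\nu_{E_0}(\Sigma_{E_0})<\infty\), and we normalize when the surface is nonempty.

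\textbf{Invariance.} Work on a compact regular window \(K=[E_0-\epsilon,E_0+\epsilon]\). Use the fibre coordinates \((q,E)\) with \(p=\varepsilon\sqrt{2(E-V_e(q))}\) on each branch. There \(dq\,dp=dq\,dE/|p|\), so
\[
m|_{E_K}=\int_K \nu_E\,dE ,
\]
where the edgewise definition of \(\nu_E\) makes sense for every \(E\in K\). For almost every \(E\) the level \(E\) is regular for all \(V_e\) by Sard, which applies since \(V_e\) is \(C^2\) on an interval. Since \(\Phi_t\) preserves \(H\) (Theorem~\ref{thm:main}) and preserves \(m\), uniqueness of disintegration gives \((\Phi_t)_\#\nu_E=\nu_E\) for almost every \(E\in K\). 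This gives invariance only for almost every energy, so we still have to reach the prescribed \(E_0\).

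\textbf{Reaching \(E_0\) (the main obstacle).} We argue directly with short times \(\Delta<\tau_K\), as in Lemma~\ref{lem:one-impact-measure}, rather than by a limit in \(E\).

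\emph{Off impacts.} Away from impacts and turning points, the edge flow on the level curve satisfies \(dq/dt=p\). So \(dq/|p|\) is the time parameter and is transported by translation.

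\emph{Across turning points.} Here the branches \(\varepsilon=\pm1\) glue smoothly in the time parameter, because \(V_e'(q_*)\ne0\) gives a smooth interior orbit.

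\emph{At a vertex impact.} The incoming tube with \(\alpha=\alpha^{\mathrm{in}}_{\eta,r}\), where \(r\) is fixed by \(E_0\), is parametrized by \(u\in(0,\Delta)\) with measure \(du\), one copy per edge-end. The relevant map is \(u\mapsto\Delta-u\) together with the permutation \(F_v(\cdot,r)\) of edge-ends. The restriction of \(\mu_v^{\mathrm{in}}\) to a single speed is counting measure, and a permutation preserves it; this is Lemma~\ref{lem:flux-preservation} sliced at fixed \(r\).

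Summing over the finitely many tubes shows that \(\Phi_\Delta\) preserves \(\nu_{E_0}\). Iterating over finitely many short steps gives invariance for every \(t\), and \(t<0\) follows from \(\Phi_{-t}=\Phi_t^{-1}\).
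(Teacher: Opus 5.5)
Your proposal is correct and its working part is the paper's own argument: finiteness comes from the integrable square-root singularity at the finitely many linearly vanishing turning points (none accumulating at vertices since \(E_0\ne V(v)\)), and invariance holds because \(\nu_{E_0}\) is the time-parametrization measure \(|dt|=dq/|p|\), translated along edges, glued through turning points, and permuted at fixed speed \(r=\sqrt{2(E_0-V(v))}\) at vertex impacts, with only finitely many such steps on any bounded time interval. The Sard/disintegration paragraph can simply be deleted: it yields invariance only for almost every energy, and your direct short-time argument, like the paper's, never uses it.
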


\begin{proof}
On a branch of the regular energy level,
\[
        p=\varepsilon\sqrt{2(E_0-V_e(q))},
        \qquad \varepsilon=\pm1,
\]
Hamilton's equation \(\dot q=p\) gives
\[
        dt=\frac{dq}{p}
\]
with orientation, and hence the positive time-parametrization measure is
\[
        |dt|=\frac{dq}{|p|}
        =
        \frac{dq}{\sqrt{2(E_0-V_e(q))}} .
\]
This is the standard time measure on a one-dimensional Hamiltonian energy curve.

We first check finiteness.  Since \(E_0\) is a regular value of
\(V_e:(0,\ell_e)\to\mathbb R\), every interior zero \(q_0\) of
\(E_0-V_e\) satisfies \(V_e'(q_0)\ne0\).  Hence \(E_0-V_e(q)\) vanishes
linearly at \(q_0\), and
\[
        \frac{dq}{\sqrt{E_0-V_e(q)}}
\]
has only an integrable square-root singularity there.  These zeros are isolated
and cannot accumulate in the interior of the edge.  They also cannot accumulate
at an endpoint, since \(E_0\ne V(v)\) at every vertex and \(V\) is continuous.
Thus each compact edge contains only finitely many turning points, and the
total measure is finite because the graph has finitely many edges.

Interior turning points, namely points with \(E_0=V_e(q)\) and \(p=0\),
belong to the energy surface but are not included in the branch parametrization
over \(U_e(E_0)\).  This causes no ambiguity: the preceding estimate shows that
the time measure has finite mass in a punctured neighborhood of each such point,
and adjoining the turning point itself changes the measure by zero.  Thus
\(\nu_{E_0}\) is a measure on the quotient energy surface whose possible
interior zero-momentum points are \(\nu_{E_0}\)-null.

The definition of \(\nu_{E_0}\) is compatible with the quotient.  The only
nontrivial quotient identifications on \(\Sigma_{E_0}\) occur at boundary
covectors, while the formula defining \(\nu_{E_0}\) integrates over open edge
sets \(U_e(E_0)\).  Thus the boundary equivalence classes have
\(\nu_{E_0}\)-measure zero, and the use of
\[
        \pi(e,q,\varepsilon\sqrt{2(E_0-V_e(q))})
\]
defines a measure on the quotient energy surface.

Away from vertices and interior turning points, the flow is translation in the
time coordinate on each energy branch and therefore preserves \(|dt|\).
At an interior turning point \(q_0\), the assumption \(V_e'(q_0)\ne0\) implies
that the two momentum branches join into a regular one-dimensional energy
curve.  The coordinate \(t\), defined by \(dt=dq/p\) with orientation, extends
through the turning point with finite total variation.  Since the turning point
itself is \(\nu_{E_0}\)-null, passage through it preserves the
time-parametrization measure.

At a vertex \(v\), the scattering map preserves the
boundary speed
\[
        r=\sqrt{2(E_0-V(v))}
\]
and permutes the finitely many boundary channels.  Since the vertex itself has
zero \(\nu_{E_0}\)-measure, the transition across the vertex does not change
the time-parametrization measure.

By Lemma~\ref{lem:impact-local-finiteness},
a bounded time interval contains only finitely many vertex transitions.  Hence
the flow on \(\Sigma_{E_0}\) is obtained by concatenating finitely many
measure-preserving edge translations and speed-preserving vertex permutations
on every bounded time interval.  Therefore
\[
        (\Phi_t)_\#\nu_{E_0}=\nu_{E_0}
\]
for every \(t\in\mathbb R\).  Normalizing by the finite total mass gives an
invariant probability measure when \(\Sigma_{E_0}\ne\varnothing\).
\end{proof}

\begin{example}[Reflecting interval]
Let \(\Gamma=[0,L]\) and put reflecting laws at both endpoints.  The theorem
recovers the usual one-dimensional Hamiltonian motion with elastic reflection,
restricted to energies different from the endpoint values of \(V\).  On a
regular energy level, Proposition~\ref{prop:energy-surface-measure} gives the
standard invariant density \(dq/\sqrt{2(E-V(q))}\) on each momentum branch.
\end{example}

\section{Conclusion and limitations}\label{sec:discussion}

The construction is conditional in the physically relevant sense that the
metric graph and the potential determine only the edgewise Hamiltonian motion.
At a branching vertex, conservation of energy determines the outgoing speed but
not the outgoing edge-end.  The scattering laws are therefore genuine mechanical
data, analogous in role to boundary conditions in quantum or stochastic graph
models.  Different scattering laws on the same metric graph can produce
different deterministic classical systems, even though they share the same
edge Hamiltonian and the same conserved energy.  This is precisely why the
vertex maps are part of the mathematical model rather than a choice made inside
the proof.

The phase space used here is a measurable quotient of the disjoint union of the
edgewise Hamiltonian phase spaces \(T^*[0,\ell_e]\cong[0,\ell_e]\times\mathbb R\).
The results establish a bimeasurable, measure-preserving scattered evolution.
They do not assert that the quotient is a smooth symplectic manifold or that the
scattered evolution is generated by a global Hamiltonian vector field.

The exclusion of the levels \(V(v)\) is a global regularity condition ensuring
that every vertex impact has nonzero momentum.  A more refined theory could keep
states with such energies when their trajectories never encounter the relevant
vertex, but that refinement is not pursued here.

Finiteness of the graph is used in three places: the set of vertex energy values
is finite, the number of edge-ends is finite, and the no-Zeno estimate has a
uniform collar size.  Infinite or locally finite graphs would require explicit
compactness and non-accumulation assumptions.  Further extensions include
zero-momentum vertex rules, stochastic or multivalued scattering, smoother
scattering laws leading to symplectic structures on regular strata, and
infinite graphs under suitable no-Zeno hypotheses.

\section*{Author Declarations}

\subsection*{Conflict of Interest}
The author has no conflicts to disclose.

\subsection*{Author Contributions}
Philip Hierhager: Conceptualization, Methodology, Formal analysis,
Investigation, Writing -- original draft, Writing -- review and editing.

\subsection*{Data Availability}
Data sharing is not applicable to this article as no new data were created or analyzed in this study. 

\subsection*{Use of AI Tools}
The author used OpenAI ChatGPT for language refinement, organizational
suggestions, and exploratory assistance during the preparation of this
manuscript. In particular, the tool was used to suggest possible proof
strategies, identify potential gaps or notational ambiguities, and improve the
presentation of definitions and proofs. 
All mathematical claims, proofs,
citations, and conclusions were independently reviewed, revised, and verified by
the author, who takes full responsibility for the final content of the
manuscript.

\bibliography{bib}

\end{document}